\newtheoremstyle{note}
{3pt}
{3pt}
{}
{}
{\itshape}
{:}
{.2em}
{}
\newtheorem{theorem}{Theorem}
\newtheorem{lemma}{Lemma}
\newtheorem{definition}{Definition}
\newtheorem{remark}{Remark}
\newtheorem{problem}{Problem}
\newcommand\munderbar[1]{%
  \underaccent{\bar}{#1}}
\begin{document}
%
\title{A Formal Traffic Characterization of LTI Event-triggered Control Systems}
%
%
%

\author{A.~Sharifi~Kolarijani and
        M.~Mazo~Jr.,~\IEEEmembership{Member,~IEEE}
\thanks{A.~Sharifi~ Kolarijani and M.~Mazo~Jr. are with the Delft Center for Systems and Control, Delft University of Technology, The Netherlands, e-mail: (\{a.sharifikolarijani, m.mazo\}@tudelft.nl).}
\thanks{Manuscript received April 00, 2015; revised December 00, 2015.}}

%
%

\markboth{Submitted}%
{Sharifi Kolarijani \MakeLowercase{\textit{et al.}}: A Formal Traffic Characterization of LTI Event-triggered Control Systems}
%



\maketitle

\begin{abstract}
Unnecessary communication and computation in the periodic execution of control tasks lead to over-provisioning in hardware design (or underexploitation in hardware utilization) in control applications, such as networked control systems. To address these issues, researchers have proposed a new class of strategies, named event-driven strategies. Despite of their beneficiary effects, matters like task scheduling and appropriate dimensioning of communication components have become more complicated with respect to traditional periodic strategies. In this paper, we present a formal approach to derive an abstracted system that captures the sampling behavior of a family of event-triggered strategies for the case of LTI systems. This structure approximately simulates the sampling behavior of the aperiodic control system. Furthermore, the resulting quotient system is equivalent to a timed automaton. In the construction of the abstraction, the state space is confined to a finite number of convex regions, each of which represents a mode in the quotient system. An LMI-based technique is deployed to derive a sampling time interval associated to each region. Finally, reachability analysis is leveraged to find the transitions of the quotient system.
\end{abstract}

\begin{IEEEkeywords}
Event-triggered control, timed automata, LMI, formal methods, reachability analysis.
\end{IEEEkeywords}

%
\IEEEpeerreviewmaketitle

\section{Introduction}
\label{sec_2}
%
%
%
%
\IEEEPARstart{I}{N} networked control systems, particularly over wireless or shared channels, the scarcity of communication resources makes the application of traditional control strategies with periodic sampling inefficient. Alternative approaches with aperiodic sampling, such as event-triggered control (ETC) and self-triggered control (STC), have been recently proposed to reduce network usage. The underlying idea in these approaches is to take into account the dynamics of control systems in the sampling procedure, to only use the communication channel when strictly necessary.  

In these approaches, control actions are executed at the instants that a pre-specified condition, the so-called \emph{triggering mechanism}, is violated. In ETC, an intelligent sensory system continuously monitors the states of the control plant and determines the instants when the triggering condition is violated \cite{EDC}. ETC has been also used under other names, such as interrupt-based feedback control \cite{Hri}, Lebesgue sampling \cite{Ast}, asynchronous sampling \cite{Vou}, state-triggered feedback \cite{Tab}, and level crossing sampling \cite{Kof}. The necessity of an intelligent sensory system in ETC motivated researchers to propose another class of aperiodic approaches, namely STC \cite{Vel}. In STC, the controller is responsible for the determination of sampling instants, i.e., the next sampling instant is computed at the current sampling instant by the controller \cite{Anta, Man, Wang}. However, the robustness of STC strategies against disturbances is a concern because of the open-loop nature of the control action implementation between two consecutive sampling instants (notice that this is not an issue in ETC strategies which continuously supervise the plant). The assumption of availability of states in the feedback path is addressed in an extension to output-feedback event-triggered strategies in \cite{Don}. Most of the aforementioned approaches focus in decreasing the (computation and communication) resource utilization of real-time control systems. Nonetheless, there is a lack of tools to translate such approaches to frameworks that can be exploited by computer engineers in designing real-time systems. Providing such frameworks shall enable the reduction of hardware over-provisioning in the design process.

Several controller/scheduler co-design approaches of real-time systems can be found in the literature, e.g. feedback modification to task attributes \cite{But, Cac, Lu, Cer}, anytime controllers \cite{Bhat, Font}, and event-based control and scheduling \cite{Areq1, Areq2}. In \cite{But, Cac, Lu, Cer}, the principles of feedback control  theory were used to integrate scheduling and control in real-time system designs. \cite{Bhat} and \cite{Font} deployed the concept of anytime controllers in their co-design approaches in which there is a trade off between control performance and computation complexity. The problems of resource utilization and resource distribution have been jointly addressed in \cite{Areq1, Areq2}. These approaches include designing of a control law, a scheduler, and an event generator. The control law enhances the performance while the scheduler and event generator improve the efficiency of the resource usage. 
In the present paper we take an alternative approach providing a decoupling between controller design, event-triggered implementation and the scheduling design. Such decoupling between control and scheduling has the benefits of increased scalability and versatility of networked controller designs, with respect to monolithic co-design approaches as those mentioned earlier.

While in traditional periodic implementations of controllers the decoupling between control and scheduling is naturally provided by the period defining the frequency at which the control loop must be updated, in ETC and STC the matter is more involved. 
The triggering mechanism in ETC results in aperiodic control systems, in which sampling is a function of time and/or states, and thus no single parameter provides the necessary information for appropriate scheduling. In this study, we take one step further to complement a family of ETC strategies by abstracting their sampling behavior into simple structures that can be employed by real-time engineers for scheduling (and possibly network dimensioning). The objective of this paper is thus to model the \emph{traffic} generated by aperiodic ETC control systems, understood as all possible traces of sampling times. 
We achieve this by mapping the initial control system (which is infinite-state) into a type of quotient system (which is finite-state) called \emph{power quotient system}. We show that the power quotient system is in fact a \emph{timed automaton}~\cite{Alu}, and that it captures the original timing behavior accurately. The latter is shown by establishing an approximate simulation relation between the abstraction and the original system.

We consider linear time invariant (LTI) systems with state feedback laws and sample-and-hold control actions implemented in an ETC fashion as in~\cite{Tab_etc}. Our procedural technique is inspired by a state-dependent sampling proposed by \cite{Fit} where the $\beta$-stability \cite{Khal} of the system's origin is guaranteed based on an LMI-based approach derived from Lyapunov-Razumikhin stability conditions \cite{Kolm}. These conditions are developed based on the delayed nature of the system (related to the sample-and-hold nature of such systems) suggested by \cite{Frid}. Inspired by \cite{Fit}, we employ a two-step approach to compute sampling intervals associated to states: first, to remove the spatial dependencies, the state space is partitioned (abstracted) to a finite number of convex polyhedral cones (pointed at the origin); then, to remove the temporal dependencies, each conic region is associated to a time interval by using a convex embedding approach proposed by \cite{Het}. Each of these time intervals captures all possible inter-sample times of the corresponding region. To derive the desired quotient system, each conic region is considered as a discrete state (mode). Then, in order to derive the transitions of the quotient system, a reachability analysis is performed employing the sampling intervals computed earlier. This analysis is an extension of an approach proposed by \cite{Chut} for continuous autonomous systems, namely outer approximations of reachable sets by a set of convex polytopes. 
Finally, we show that the derived quotient systems are in fact equivalent to a \emph{timed safety automata}~\cite{Henz94} (\emph{timed automata}~\cite{Alu} without accepting conditions).
%

The organization of the remainder is as follows. The mathematical preliminaries and problem setup are presented in Section \ref{sec_3}. The formal methodology to construct the timing abstraction is explained in Section \ref{sec_4}. In Section \ref{sec_5}, the main results of this paper are summarized. Finally, an illustrative example is introduced in Section \ref{sec_6} and the paper is concluded with a brief discussion in Section \ref{sec_7}.

\section{Preliminaries and Problem Setup}
\label{sec_3}
In this section, we provide first the mathematical notation and revisit definitions of some notions employed in the paper. Next, we reformulate as a quadratic function of states at sampling instants the Input-to-State-Stability (ISS) based ETC approach proposed by \cite{Tab_etc} for LTI systems. Finally, the problem of deriving a structure that captures the sampling behavior of the ETC system is formalized. 
 
\subsection{Mathematical Preliminaries}
In what follows, $\mathbb{R}^n$ denotes the $n$-dimensional Euclidean space, $\mathbb{R}^+$ and $\mathbb{R}_0^+$ denote the positive and nonnegative reals, respectively, $\mathbb{N}$ is the set of positive integers, and $\mathbb{IR}^+$ is the set of all closed intervals $[a,b]$ such that $a,b \in \mathbb{R}^+$ and $a \leq b$. For any set $S$, $2^S$ denotes the set of all subsets of $S$, i.e. the power set of $S$. $\mathcal{M}_{m \times n}$ and $\mathcal{M}_n$ are the set of all $m \times n$ real-valued matrices and the set of all $n \times n$ real-valued symmetric matrices, respectively. For a matrix $M$, $M \preceq 0$ (or $M \succeq 0$) means $M$ is a negative (or positive) semidefinite matrix and $M \succ 0$ indicates $M$ is a positive definite matrix. $\lfloor x \rfloor$ indicates the largest integer not greater than $x \in \mathbb{R}$ and $|y|$ denotes the Euclidean norm of a vector $y \in \mathbb{R}^n$. Given two sets $Z_a$ and $Z_b$, every relation $Q \subseteq Z_a \times Z_b$ admits $Q^{-1}=\{(z_b,z_a)\in Z_b \times Z_a|(z_a,z_b) \in Q \}$ as its inverse relation. When $Q\subseteq Z\times Z$ is an equivalence relation on a set $Z$, $[z]$ denotes the equivalence class of $z \in Z$ and $Z/Q$ denotes the set of all equivalence classes. 

A fundamental observation that we employ in what follows is that the ordered pair $(\mathbb{IR}^+,d_H)$ is a metric space:

\begin{definition}{(Metric \cite{Ewald})}
\label{def_met}
Consider a set $T$,  $d:T \times T \rightarrow \mathbb{R} \cup \{ + \infty\}$ is a metric (or a distance function) if the following three conditions are satisfied $\forall x, y, z \in T$:
\begin{enumerate}
\item $d(x,y)=d(y,x)$,
\item $d(x,y)=0 \leftrightarrow x=y$, 
\item $d(x,y) \leq d(x,z)+d(y,z)$.
\end{enumerate}
The ordered pair $(T,d)$ is said to be a metric space.
\end{definition}

\begin{definition}{(Hausdorff Distance \cite{Ewald})}
\label{def_haus}
Assume $X$ and $Y$ are two non-empty subsets of a metric space $(T,d)$. The Hausdorff distance $d_{H}(X,Y)$ is given by:$$\max \{ \underset{x \in X}{\sup} \underset{y \in Y}{\inf} d(x,y), \underset{y \in Y}{\sup} \underset{x \in X}{\inf} d(x,y) \}.$$
\end{definition} 

We also employ the framework from \cite{Abs} to establish relations between different systems. Some relevant notions from that framework
are summarized in the following:
 
\begin{definition}{(System \cite{Abs})}
\label{def_sys}
A system is a sextuple $(X,X_0,U,\longrightarrow,Y,H)$ consisting of:
\begin{itemize}
\item a set of states $X$;
\item a set of initial states $X_0 \subseteq X$;
\item a set of inputs $U$;
\item a transition relation $\longrightarrow \subseteq X \times U \times X$;
\item a set of outputs $Y$;
\item an output map $H: X \rightarrow Y$.
\end{itemize}
\end{definition}

The term finite-state (or infinite-state) system indicates $X$ is a finite (or infinite) set. We employ the shorthand $S=(X,U,\longrightarrow)$ to denote a system when $X=X_0=Y$ and $H: X \rightarrow X$ is the identity map.

\begin{definition}{(Metric System \cite{Abs})}
\label{def_ms}
A system $\mathcal{S}$ is said to be a metric system
if the set of outputs $Y$ is equipped with a metric $d: Y \times Y \rightarrow \mathbb{R}_0^{+}$.
\end{definition}

\begin{definition}{(Approximate Simulation Relation \cite{Abs})}
\label{def_asr}
Consider two metric systems $\mathcal{S}_a$ and $\mathcal{S}_b$ with $Y_a=Y_b$, and let $\epsilon \in \mathbb{R}^+_0$. A relation $R \subseteq X_a \times X_b$
is an $\epsilon$-approximate simulation relation from $\mathcal{S}_a$ to $\mathcal{S}_b$ if the following three
conditions are satisfied:
\begin{enumerate}
\item $\forall x_{a0} \in X_{a0}, \exists x_{b0} \in X_{b0} \text{ such that } (x_{a0},x_{b0}) \in R$;
\item $\forall (x_a,x_b) \in R \text{ we have } d(H_a(x_a),H_b(x_b)) \leq \epsilon$; 
\item $\forall (x_a,x_b) \in R \text{ such that } (x_a,u_a,x'_a) \in \underset{a}{\longrightarrow}$ in $\mathcal{S}_a$ implies $\exists (x_b,u_b,x'_b) \in \underset{b}{\longrightarrow}$ in $\mathcal{S}_b$ satisfying $(x'_a,x'_b) \in R$.
\end{enumerate}
\end{definition}

We denote the existence of an $\epsilon$-approximate simulation relation from $\mathcal{S}_a$ to $\mathcal{S}_b$ by $\mathcal{S}_a\preceq^{\epsilon}\mathcal{S}_b$, and say that $\mathcal{S}_b$ $\epsilon$-approximately simulates $\mathcal{S}_a$. Whenever $\epsilon = 0$, the inequality $d(H_a(x_a), H_b(x_b))\leq \epsilon$ implies $H_a(x_a) = H_b(x_b)$ and the resulting relation is called \emph{(exact) simulation relation}.

Finally, we propose an alternative notion of \emph{quotient system} (see e.g. \cite{Abs} for the traditional definition) to suit our needs:

\begin{definition}{(Power Quotient System)}
\label{def_quo}
Let $\mathcal{S}=(X,X_0,U,\longrightarrow,Y,H)$ be a system and $R$ be an equivalence relation on $X$.
The power quotient of $\mathcal{S}$ by $R$, denoted by $\mathcal{S}_{/R}$, is the system $(X_{/R},X_{/R0},U_{/R},\underset{/R}{\longrightarrow},Y_{/R},H_{/R})$ consisting of:
\begin{itemize}
\item $X_{/R}=X/R$;
\item $X_{/R0}=\{ x_{/R} \in X_{/R} | x_{/R} \cap X_0 \neq \varnothing \}$;
\item $U_{/R}=U$;
\item $(x_{/R},u,x'_{/R}) \in \underset{/R}{\rightarrow}$ if $\exists (x,u,x') \in \rightarrow$ in $\mathcal{S}$ with $x \in x_{/R}$ and $x' \in x'_{/R}$;
\item $Y_{/R} \subset 2^Y$;
\item $H_{/R}(x_{/R})=\underset{x \in x_{/R}}{\cup}H(x)$.
\end{itemize}
\end{definition}

In the following lemma, we provide a result establishing the relationship between a power quotient system and the original system. 

\begin{lemma}
\label{lem_app_sim}
Let $\mathcal{S}$ be a metric system, $R$ be an equivalence relation on $X$, and let the metric system $\mathcal{S}_{/R}$ be the power quotient system of $\mathcal{S}$ by $R$. For any $\epsilon\geq\max_{x \in x_{/R},x_{/R}\in X/R} d(H(x),H_{/R}(x_{/R}))$, with $d$ the Hausdorff distance over the set $2^Y$, $\mathcal{S}_{/R}$ $\epsilon$-approximately simulates $\mathcal{S}$, i.e. 
$\mathcal{S}\preceq^\epsilon \mathcal{S}_{/R}.$
\end{lemma}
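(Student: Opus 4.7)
The plan is to exhibit an explicit relation and verify the three clauses of Definition \ref{def_asr}. The natural candidate is the membership relation
\[
R' = \{(x,x_{/R}) \in X \times X_{/R} \,:\, x \in x_{/R}\},
\]
since the quotient simply groups states sharing an equivalence class. A preliminary point to address is that the output sets of $\mathcal{S}$ and $\mathcal{S}_{/R}$ are technically different: $Y$ and $Y_{/R}\subset 2^Y$. This is resolved by embedding $Y \hookrightarrow 2^Y$ via the singleton map $y \mapsto \{y\}$, so that both systems can be viewed as outputting in $2^Y$ equipped with the Hausdorff metric $d_H$. This is exactly the metric referred to in the statement, and it is well-defined on $2^Y$ as soon as $(Y,d)$ is a metric space in the sense of Definition \ref{def_met}.

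Next I would verify the three conditions in order. For the initial-state condition, given any $x_0 \in X_0$, take the class $[x_0] \in X_{/R}$; by the definition of $X_{/R0}$ in Definition \ref{def_quo}, $[x_0]$ lies in $X_{/R0}$ since $[x_0] \cap X_0 \ni x_0$, and by construction $(x_0,[x_0]) \in R'$. For the output condition, for any $(x,x_{/R}) \in R'$ we have
\[
d_H\bigl(\{H(x)\},\,H_{/R}(x_{/R})\bigr) \;\leq\; \max_{x \in x_{/R},\, x_{/R} \in X/R} d_H\bigl(\{H(x)\},\,H_{/R}(x_{/R})\bigr) \;\leq\; \epsilon,
\]
by the hypothesis on $\epsilon$. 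Finally, for the transition condition, suppose $(x,x_{/R}) \in R'$ and $(x,u,x') \in \underset{}{\longrightarrow}$ in $\mathcal{S}$. Set $x'_{/R} = [x']$; then $x \in x_{/R}$ and $x' \in x'_{/R}$, so by the defining rule of the power-quotient transition relation in Definition \ref{def_quo} we obtain $(x_{/R},u,x'_{/R}) \in \underset{/R}{\longrightarrow}$, while $(x',x'_{/R}) \in R'$ by construction.

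The main conceptual hurdle, and really the only one, is the reinterpretation of outputs: one has to reconcile the fact that $H(x) \in Y$ while $H_{/R}(x_{/R}) \in 2^Y$, and to argue that the Hausdorff distance on $2^Y$ is the right metric to quantify the mismatch. Once the singleton embedding is in place, the bound in the hypothesis makes clause~2 immediate, and the other two clauses follow mechanically from the definition of the power quotient. Since $R'$ meets all three requirements, we conclude $\mathcal{S} \preceq^\epsilon \mathcal{S}_{/R}$.
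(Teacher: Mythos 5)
Your proposal is correct and follows essentially the same route as the paper: the paper's proof also takes the membership relation $\{(x,x_{/R}) : x \in x_{/R}\}$ as the candidate simulation relation and notes (just after the proof) that the identification $Y\subset 2^Y$ is what allows the Hausdorff distance to serve as a common metric on the outputs. Your write-up merely spells out the verification of the three clauses that the paper leaves as ``immediate,'' and the singleton embedding you describe is exactly the paper's $Y\subset 2^Y$ remark made explicit.
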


\begin{proof}
Let us consider the candidate simulation relation: $R_s\subset X\times X_{/R}$, where $(x, x_{/R})\in R_s$ if and only if $x\in x_{/R}$. From Definition \ref{def_quo}, the three conditions in Definition \ref{def_asr} immediately follow.
\end{proof}

Note that we are using the fact that for a given set $Y$, $Y\subset 2^Y$ in order to use the Hausdorff distance as a common metric between the output sets of the power quotient system and the concrete system.

In Section~\ref{ssec:tr} we also employ the notion of \emph{flow pipe}~\cite{Chut} to discuss reachable sets:

\begin{definition} (Flow Pipe)
The set of reachable states or the flow pipe from an initial set $X_{0,s}$, in the time interval $[\munderbar{\tau}_s,\bar{\tau}_s]$ is denoted by:
\begin{equation}
\begin{array}{l}
\mathcal{X}_{[\munderbar{\tau}_s,\bar{\tau}_s]} (X_{0,s}) = \underset{t\in [\munderbar{\tau}_s,\bar{\tau}_s]}{\bigcup} \mathcal{X}_t (X_{0,s})
\end{array}
\label{my_etc43}
\end{equation}
where $\mathcal{X}_t (X_{0,s})$ denotes the reachable set at time $t$ from $X_{0,s}$, given by:
\begin{equation}
\begin{array}{l}
\mathcal{X}_t (X_{0,s}) = \{ \xi_{x_0}(t)\,|\, x_0 \in X_{0,s} \}.
\end{array}
\label{my_etc42}
\end{equation}
\end{definition}

Finally, in Section~\ref{ssec:tar} we discuss the embedding of the constructed abstractions in the form of \emph{timed safety automata}~\cite{Henz94} (TSA). We revisit that notion following, for compactness, the presentation in~\cite{Bengtsson04}. For details on timed (safety) automata we refer the interested reader to the original works~\cite{Alu, Henz94}.

Assume a finite alphabet $\Sigma$ (actions), and let $\mathcal{C}$ be a set of finitely many real-valued variables (clocks). Consider $\sim \in \{>,\geq, <, \leq \}$, a clock constraint $\delta$ is a conjunctive formula of atomic constraints $c_1\sim k \mbox{~or~} c_1-c_2 \sim k$ for $c_1,c_2 \in \mathcal{C}$, and $k \in \mathbb{N}$. We denote by $\mathcal{B}(\mathcal{C})$ the set of clock constraints.
A clock valuation (or assignment) is a mapping of the form $\mathcal{C} \longrightarrow \mathbb{R}_0^{+}$. Let $u\in(\mathbb{R}_0^{+})^\mathcal{C}$ be a clock valuation and $g \in \mathcal{B}(\mathcal{C})$, then $u \models g$ ($u$ satisfies $g$) iff $g$ evaluates to true using the values from $u$. For $d \in \mathbb{R}^+_0$, $(u + d) (c):=u(c) + d$, $\forall c \in \mathcal{C}$. For $\textbf{c} \subseteq  \mathcal{C}$, $[\textbf{c}\to \textbf{0}]u$ is the clock assignment that maps all clocks in $\textbf{c}$ to $\textbf{0}$ (a vector with all entries equal to zero of the same size of $\textbf{c}$) and agrees with $u$ for the remaining clocks in $\mathcal{C} \setminus \textbf{c}$.

\begin{definition}{(Timed Safety Automata~\cite{Bengtsson04})}
\label{def_TA}
A \emph{timed safety automata} is a tuple $\mathcal{A}=(L,L_0,\Sigma, \mathcal{C}, E,I)$\footnote{Although technically not necessary, for clarity we extend the original definition to explicitly state the actions' and clocks' sets.} where
\begin{itemize}
\item $L$ is a finite set of locations (or discrete states);
\item $L_0 \subseteq L$ is a set of start locations\footnote{Note that for later convenience, and without any impact on the expressivity of the model, we slightly modify the definition of~\cite{Bengtsson04} to allow several possible initial locations as in the original works of~\cite{Alu, Henz94}.};
\item $\Sigma$ is the set of actions;
\item $\mathcal{C}$ is the set of clocks;
\item $E \subseteq L \times \mathcal{B}(\mathcal{C})\times \Sigma \times 2^\mathcal{C}\times L$ is the set of transitions. 
\item $I:L \longrightarrow \mathcal{B}(\mathcal{C})$ assigns invariants to locations.
\end{itemize}
\end{definition}
We use the shorthand notation $l \rTo^{g,a,r}l'$ when $(l,g,a,r,l')\in E$, representing a transition from state $l$ to state $l'$ on input symbol $a$. The set $r\subseteq \mathcal{C}$ indicates the clocks to be reset with this transition, and $g$ is a clock constraint over $\mathcal{C}$ (a guard condition).
\begin{definition}{(Operational Semantics~\cite{Bengtsson04})}
The semantics of a timed safety automaton is a transition system (also known as timed transition system) where states are pairs $(l,u)$, with $l\in L$ and $u$ a clock valuation, and transitions are defined by the rules:
\begin{itemize}
\item $(l,u)\rTo^d(l,u+d)$ if $u\models I(l)$ and $(u+d)\in I(l)$ for a nonnegative real $d\in\mathbb{R}^+$;
\item $(l,u)\rTo^a (l',u')$ if $l\rTo^{g,a,r}l'$, $u\models g$, $u'=[r\to\mathbf{0}]u$ and $u'\models I(l')$.
\end{itemize}
\end{definition}
It is worth remarking that the \emph{guards} of a command assert necessary conditions for the transitions to take place, while \emph{invariants} assert sufficient conditions for transitions to take place, and must not be violated by letting time advance. Thus, invariants establish upper bounds for the time to the next transition (to a different location)~\cite{Henz94}.

\subsection{LTI Event-Triggered Control System}
Consider linear time invariant (LTI) systems without disturbances given by:
\begin{equation}
\begin{array}{l}
\dot{\xi}(t)=A \xi(t)+B\upsilon(t), \; \xi(t) \in \mathbb{R}^{n}, \upsilon(t) \in \mathbb{R}^{m}
\end{array}
\label{my_etc1}
\end{equation}
with linear state-feedback laws implemented in a sample-and-hold manner:
\begin{equation}
\begin{array}{l}
\upsilon(t)=\upsilon(t_k)=K \xi(t_k), \; \forall t \in [t_k,t_{k+1}), \; k \in \mathbb{N}.
\end{array}
\label{my_etc6}
\end{equation}

Let us denote by $\xi_x(t) \in \mathbb{R}^n$ the solution of \eqref{my_etc1}-\eqref{my_etc6} for $t\in[t_k,t_{k+1}]$ with $\xi_x(t_k)=x$ as its initial condition. Furthermore, let us denote by $e_x(t):=x-\xi_x(t)$ the virtual error introduced by the sampling mechanism. 
We consider the event triggering approach proposed in \cite{Tab_etc} which enforces sampling instants given by conditions of the form: 
\begin{equation}
\label{tabtrig}
t_{k+1}=\min \{t> t_k| \; | e_x(t) |^{2}\geq \alpha | \xi_x(t) |^2 \},
\end{equation}
for some appropriate design parameter $\alpha\in\mathbb{R}^+$.

Let the sampling period associated to a state $\xi(t_k)=x$ be denoted by $\tau(x):=t_{k+1}-t_k$. Then, the values of $\xi_x$ and $e_x$ can be expressed in terms of $x$, for $\sigma \in [0, t_{k+1}-t_k]$ as follows: 
\begin{equation}
\begin{array}{l}
\xi_x(t_k+\sigma)=\Lambda (\sigma)x, 
\end{array} 
\label{my_etc15}
\end{equation}
\begin{equation}
\begin{array}{l}
e_x(t_k+\sigma)=[I-\Lambda (\sigma)] x
\end{array} 
\label{my_etc17}
\end{equation}
where
\begin{equation}
\begin{array}{l}
\Lambda (\sigma) = [I+\int_0^{\sigma}e^{Ar}dr(A+BK)].
\end{array} 
\label{my_etc16}
\end{equation}
Thus, substituting \eqref{my_etc15} and \eqref{my_etc17} in \eqref{tabtrig},
the following expression for the state-dependent sampling period is obtained:
\begin{equation}
\begin{array}{l}
\tau(x)=\min \{ \sigma>0 | \; x^T \Phi(\sigma) x=0 \},
\end{array} 
\label{my_etc18_1}
\end{equation}
where
\begin{equation}
\begin{array}{l}
\Phi(\sigma) =[I-\Lambda^{T} (\sigma)] [I-\Lambda (\sigma)]- \alpha \Lambda^{T} (\sigma) \Lambda (\sigma).
\end{array} 
\label{my_etc18}
\end{equation}

\subsection{Problem Statement}
Consider the system: 
\begin{equation*}
\begin{array}{l}
\mathcal{S}=(X,X_0,U,\longrightarrow,Y,H)
\end{array}
\end{equation*}
where 
\begin{itemize}
\item $X=\mathbb{R}^n$;
\item $X_0\subseteq\mathbb{R}^n$;
\item $U=\varnothing$, i.e.\ the system is autonomous;
\item $\longrightarrow \in X \times U \times X$ such that $\forall x, x' \in X: (x,x') \in \longrightarrow$ iff $\xi_x(\tau(x))=x'$;
\item $Y \subset \mathbb{R}^+$;
\item $H: \mathbb{R}^n \rightarrow \mathbb{R}^{+}$ where $H(x)=\tau(x)$.
\end{itemize}
The system $\mathcal{S}$ generates as output sequences all possible sequences of inter-sampling intervals that the system \eqref{my_etc1}-\eqref{my_etc6} with triggering condition~(\ref{tabtrig}) can exhibit. Note in particular that $\mathcal{S}$ is an infinite-state system. 

\begin{problem}
We seek to construct power quotient systems $\mathcal{S}_{/\mathcal{P}}$ based on adequately designed equivalence relations $\mathcal{P}$ defined over the state set $X$ of $\mathcal{S}$.
\end{problem}

In particular, we propose to construct the system $\mathcal{S}_{/\mathcal{P}}$ as follows:
\begin{equation*}
\begin{array}{l}
\mathcal{S}_{/\mathcal{P}}=(X_{/\mathcal{P}},X_{0/\mathcal{P}},U_{/\mathcal{P}}, \underset{/\mathcal{P}}{\longrightarrow},Y_{/\mathcal{P}},H_{/\mathcal{P}})
\end{array}
\end{equation*}
where 
\begin{itemize}
\item $X_{/\mathcal{P}}= \mathbb{R}_{/\mathcal{P}}^n:= \{\mathcal{R}_1,\dots,\mathcal{R}_q \}$;
\item $X_{0/\mathcal{P}}=\{\mathcal{R}_i\,|\, X_0\cap\mathcal{R}_i\neq\emptyset\}$;
\item $U_{/\mathcal{P}}=\varnothing$, i.e.\ the system is autonomous;
\item $(x_{/\mathcal{P}},x'_{/\mathcal{P}}) \in
\underset{/\mathcal{P}}{\longrightarrow} $ if $\exists x \in x_{/\mathcal{P}}$, $\exists x' \in x'_{/\mathcal{P}}$ such that $\xi_x(H(x))=x'$;
\item $Y_{/\mathcal{P}} \subset 2^Y \subset \mathbb{IR}^{+}$;
\item \small{$H_{/\mathcal{P}}(x_{/\mathcal{P}})= [\underset{x \in x_{/\mathcal{P}}}{\min}H(x),\underset{x \in x_{/\mathcal{P}}}{\max}H(x)]:=[\munderbar{\tau}_{x_{/\mathcal{P}}},\bar{\tau}_{x_{/\mathcal{P}}}]$}.
\end{itemize}

The remaining question is now how to: select an appropriate equivalence relation $\mathcal{P}$, compute the respective intervals $[\munderbar{\tau}_{x_{/\mathcal{P}}},\bar{\tau}_{x_{/\mathcal{P}}}]$, and determine when there is a transition between a pair of abstract states $(x_{/\mathcal{P}},x'_{/\mathcal{P}})$. These three questions, and the respective constructions, are addressed in the following section.

\section{Construction of the Abstraction}
\label{sec_4}

In this section, we address the construction of $\mathcal{S}_{/\mathcal{P}}$, and more specifically $X_{/\mathcal{P}}$, $H_{/\mathcal{P}}$, and ${\underset{/\mathcal{P}} {\longrightarrow}}$. The set $X_{/\mathcal{P}}$ and map $H_{/\mathcal{P}}$ are constructed employing a two-step approach inspired by the work in \cite{Fit}. Nonetheless, some modifications are introduced to make that approach suitable to our goals. In \cite{Fit} the minimum inter-sample times, $\munderbar{\tau}_s$, are computed for conic regions and Lyapunov-Razumikhin stability conditions. We adapt that approach (in our Lemma \ref{lem3} and Theorem \ref{them3}) to apply instead to the ISS based triggering mechanism in \eqref{my_etc18_1}-\eqref{my_etc18}. More importantly, we modify the approach to also provide upper limits,  $\bar{\tau}_s$, on the regional inter-sample times (see our Lemma \ref{lemm_u1} and Theorem \ref{them4}). 
We also propose, inspired by the work on stability analysis of switched systems using multiple Lyapunov functions~\cite{Deca}, a modification to the way the S-procedure is employed in~\cite{Fit} for $n$-dimensional state spaces with $n \geq 3$ (see Theorems \ref{them3} and \ref{them4}). 
Finally, we suggest to use a reachability analysis like the one in~\cite{Chut} in order to construct the transition relation ${\underset{/\mathcal{P}} {\longrightarrow}}$. 

\subsection{State Set}
The following observation is the cornerstone in removing spatial dependencies from (\ref{my_etc18_1}) and mapping the infinite-state system $\mathcal{S}$ to a finite-state system $\mathcal{S}_{/\mathcal{P}}$.

\begin{remark}\label{rem_1}
Excluding the origin, all the states which lie on a line that goes through the origin have the same inter-sample time, i.e., $\tau(x)=\tau(\lambda x)$, $\forall \lambda \neq 0$ \cite{Fit}.
\end{remark}

Based on Remark \ref{rem_1} and the fact that a convex polyhedral cone (pointed at the origin) is the union of an infinite number of rays, a convenient approach to partition the state space is via abstracting it into a finite number of convex polyhedral cones (pointed at the origin) $\mathcal{R}_s$ where $s \in \{1,\dots,q \}$ and $\bigcup_{s=1}^q \mathcal{R}_s = \mathbb{R}^n$ (see Figure \ref{abs_graph}). 

The state space abstraction technique proposed by \cite{Fit}, called \emph{isotropic covering}, is briefly explained in the following. Generalized spherical coordinates $x \in \mathbb{R}^n : \; (r, \theta_1, \dots,\theta_{n-1})$ are used for the abstraction purpose, where $r=|x|$ and $\theta_1, \dots,\theta_{n-1}$ are the corresponding angular coordinates of $x$ and $\theta_1, \dots,\theta_{n-2} \in [0,\pi]$  and $\theta_{n-1} \in [-\pi,\pi]$. Each angular coordinate is divided into $\bar{m}$ (not necessarily equidistant) intervals. Hence, the number of conic regions $q$ is equal to $\bar{m}^{(n-1)}$ (see \cite{Fit} for more details on constructing these state space partitions). Remark \ref{rem_1} also suggests that it suffices to only consider half of the state space since $x$ and $-x$ behave in the same way in (\ref{my_etc18_1}). Therefore, one can consider half of the state space (for example, by assuming $\theta_{n-1} \in [0,\pi]$), and then appropriately map the results to the other half of the state space. In this case, $q$ is equal to $2 \times \bar{m}^{(n-1)}$.
The resulting polyhedral cones admit the following representations: 
\begin{eqnarray}
\label{eq:poly_rep}
\notag
\mathcal{R}_s=&\{ x \in \mathbb{R}^2| \; x^T Q_s x \geq 0 \},&\;\text{if}\, n=2\\ 
\mathcal{R}_s=&\{ x \in \mathbb{R}^n| \; E_s x \geq 0\},&\; \text{if}\, n\geq 3 
\end{eqnarray} 
for $s\in\{1,\ldots,\,q\}$ and some appropriately designed matrices $Q_s=Q_s^T \in \mathcal{M}_2(\mathbb{R})$ or $E_s \in \mathcal{M}_{n\times p}(\mathbb{R})$ with $p\leq 2n-2$.

\subsection{Output Map}
We focus now on the construction of the output map $H_{/\mathcal{P}}$ (and the output set $Y_{/\mathcal{P}}$ in the process). We propose a constructive method to find a time interval $[\munderbar{\tau}_s,\bar{\tau}_s]$ associated to each region $\mathcal{R}_s$ such that $\forall x \in \mathcal{R}_s: \tau(x) \in [\munderbar{\tau}_s,\bar{\tau}_s]$, with $\tau(\cdot)$ as in (\ref{my_etc18_1}). We construct convex polytopes around the matrix $\Phi(\sigma)$ (see~\cite{Het}), resulting in LMIs which combined with the S-procedure allows us to find bounds $\munderbar{\tau}_s$ and $\bar{\tau}_s$ on the different regions as specified in the previous subsection.

We illustrate first how to derive $\munderbar{\tau}_s$ so that in the time interval $[0,\munderbar{\tau}_s]$ no triggering is enabled. The main idea of the following lemma is to construct a finite set of matrices $\munderbar\Phi_{\kappa,s}$, with $\kappa \in \mathcal{K}_s$ (a finite set of indices), such that: 
$$(x^T \munderbar\Phi_{\kappa,s} x \leq 0, \forall \kappa \in \mathcal{K}_s)\Longrightarrow (x^T \Phi(\sigma) x \leq 0, \forall \sigma \in [0,\munderbar{\tau}_s]).$$

\begin{figure}[t]
 \centering
  \includegraphics[width=0.6
  \linewidth]{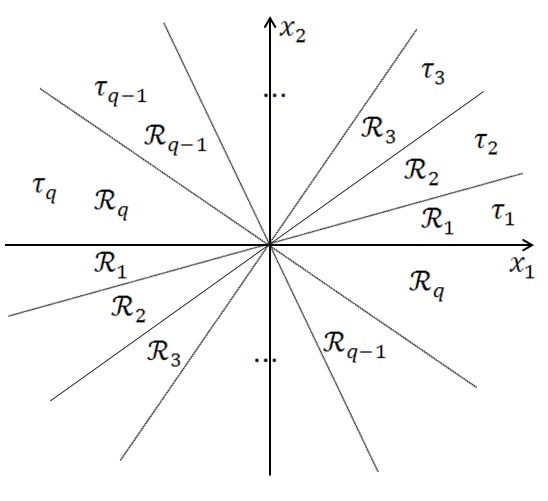}
 \caption{An illustrative $2$-dimensional example of the state space abstraction using convex polyhedral cones.} 
 \label{abs_graph}
\end{figure}

Let the scalar $\bar{\sigma}>0$ denote a time instant for which the triggering mechanism (\ref{my_etc18_1}) is enabled in the whole state space, i.e., $x^T \Phi(\bar\sigma)x \geq 0,\forall x \in \mathbb{R}^n$ (Remark \ref{rem_sig} addresses the selection of $\bar{\sigma}$). Select an integer $N_{conv}\geq 0$ such that $N_{conv}+1$ is the number of vertices considered for the polytope containing $\Phi(\sigma)$, and an integer $l\geq1$ determining the number of time subdivisions considered in the time interval $[0,\bar{\sigma}]$.

\begin{lemma}
\label{lem3}
Consider a time limit $\munderbar{\tau}_s \in (0,\bar{\sigma}]$. If 
$$x^T \munderbar{\Phi}_{(i,j),s} x \leq 0,$$ 
$$\forall(i,j) \in \mathcal{K}_s=(\{0,\dots,N_{conv} \} \times \{0,\dots,\lfloor \frac{\munderbar{\tau}_s l}{\bar{\sigma}} \rfloor \}),$$
then 
$$x^T \Phi(\sigma) x \leq 0\,,\forall \sigma \in [0,\munderbar{\tau}_s],$$ 
with $\Phi$ as in (\ref{my_etc18}) and
\begin{equation*}
\begin{array}{l}
\munderbar{\Phi}_{(i,j),s}:=\munderbar{\hat{\Phi}}_{(i,j),s}+ \munderbar{\nu} I, 
\end{array}
\end{equation*}

\begin{equation}
\begin{array}{l}
\munderbar{\hat{\Phi}}_{(i,j),s}:=\left\{\begin{array}{lc}
\sum_{k=0}^i \; L_{k,j} (\frac{\bar{\sigma}}{l})^k  \; \; \; if \; j < \lfloor \frac{\munderbar{\tau}_s l}{\bar{\sigma}} \rfloor, \\
\sum_{k=0}^i \; L_{k,j} (\munderbar{\tau}_s -\frac{j \bar{\sigma}}{l})^k  \; \; \; j=\lfloor\frac{\munderbar{\tau}_s l}{\bar{\sigma}} \rfloor,
\end{array}\right. 
\end{array}
\label{my_etc20}
\end{equation}

\begin{equation}
\begin{array}{l}
\left\{\begin{array}{ll}
L_{0,j}:=& I-\Pi_{1,j}-\Pi^{T}_{1,j}+(1-\alpha) \Pi^{T}_{1,j} \Pi_{1,j}, \\
L_{1,j}:=& [(1-\alpha)\Pi^{T}_{1,j}-I] \Pi_{2,j}\\  & +\Pi^{T}_{2,j}[(1-\alpha)\Pi_{1,j}-I], \\
L_{k\geq 2,j}:=& [(1-\alpha) \Pi^{T}_{1,j}-I] \frac{A^{k-1}}{k!} \Pi_{2,j}\\
 & +\Pi^{T}_{2,j} \frac{(A^{k-1})^{T}}{k!} [(1-\alpha) \Pi_{1,j}-I]\\
 & +(1-\alpha) \Pi^{T}_{2,j}(\sum^{k-1}_{i=1} \frac{(A^{i-1})^{T}}{i!} \frac{A^{k-i-1}}{(k-i)!}) \Pi_{2,j}, 
\end{array}\right. 
\end{array}
\label{my_etc21}
\end{equation}
\begin{equation}
\begin{array}{l}
\left\{\begin{array}{l}
\Pi_{1,j}:=I+M_j (A+BK), \\
\Pi_{2,j}:=N_j(A+BK),
\end{array}\right. 
\end{array}
\label{my_etc22}
\end{equation}
\begin{equation}
\begin{array}{l}
M_j:= \int^{j \frac{\bar{\sigma}}{l}}_0 e^{As}ds, \; \; \; N_j:=AM_j+I,
\end{array}
\label{my_etc23}
\end{equation}
and
\begin{equation}
\begin{array}{l}
\munderbar{\nu} \geq \underset{\sigma' \in [0,\frac{\bar{\sigma}}{l}],\; r \in \{ 0,\dots, l-1\}}{\max} \lambda_{\max} (\Phi(\sigma'+r\frac{\bar{\sigma}}{l})-\tilde{\Phi}_{N_{conv},r}(\sigma')), \\
\end{array}
\label{my_etc24}
\end{equation}
where
\begin{equation}
\begin{array}{l}
\tilde{\Phi}_{N_{conv},r}(\sigma'):= \sum_{k=0}^{N_{conv}} \; L_{k,r} \sigma'^{k}. 
\end{array}
\label{my_etc25}
\end{equation}
\end{lemma}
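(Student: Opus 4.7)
The plan is to show that on every point $\sigma\in[0,\munderbar\tau_s]$ the matrix $\Phi(\sigma)$ is (in the sense $x^T\cdot x$) dominated by a convex combination of the finite set of matrices $\munderbar{\hat\Phi}_{(i,j),s}$ plus a correction $\munderbar\nu I$, so that the hypothesis $x^T\munderbar\Phi_{(i,j),s}x\leq 0$ transfers to $x^T\Phi(\sigma)x\leq 0$.

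First I would slice the interval $[0,\munderbar\tau_s]$ into $l$-adic pieces $[j\tfrac{\bar\sigma}{l},(j+1)\tfrac{\bar\sigma}{l}]$ for $j=0,\dots,\lfloor\munderbar\tau_s l/\bar\sigma\rfloor-1$, plus a final (possibly shorter) piece $[\lfloor\munderbar\tau_s l/\bar\sigma\rfloor\tfrac{\bar\sigma}{l},\munderbar\tau_s]$. For $\sigma$ in the $j$-th slice, set $\sigma'=\sigma-j\tfrac{\bar\sigma}{l}$ and decompose the integral appearing in $\Lambda(\sigma)$ as $\int_0^\sigma e^{Ar}dr = M_j + N_j\int_0^{\sigma'}e^{As}ds$, using the identity $N_j=AM_j+I=e^{Aj\bar\sigma/l}$. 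Expanding $\int_0^{\sigma'}e^{As}ds=\sum_{k\geq 1}\tfrac{A^{k-1}}{k!}(\sigma')^k$ and collecting terms yields $\Lambda(\sigma)=\Pi_{1,j}+\sum_{k\geq 1}\tfrac{(\sigma')^k}{k!}A^{k-1}\Pi_{2,j}$. Plugging into $\Phi=I-\Lambda-\Lambda^T+(1-\alpha)\Lambda^T\Lambda$ and grouping by powers of $\sigma'$ yields $\Phi(\sigma)=\sum_{k\geq 0}L_{k,j}(\sigma')^k$ with the $L_{k,j}$ matching precisely \eqref{my_etc21}; this step is direct algebra, the only care being the bookkeeping of the cross-term $(1-\alpha)\Pi_{2,j}^T(\sum_{i=1}^{k-1}\cdots)\Pi_{2,j}$ arising from $\Lambda^T\Lambda$.

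Second, I would truncate at order $N_{conv}$, obtaining $\tilde\Phi_{N_{conv},j}(\sigma')$ as in \eqref{my_etc25}. By the defining bound \eqref{my_etc24} on $\munderbar\nu$, the remainder satisfies $\Phi(\sigma)-\tilde\Phi_{N_{conv},j}(\sigma')\preceq \munderbar\nu I$ for every $\sigma'\in[0,\bar\sigma/l]$ and every $j$, so that $x^T\Phi(\sigma)x\leq x^T\tilde\Phi_{N_{conv},j}(\sigma')x+\munderbar\nu|x|^2$.

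The central step, and the one I expect to be the most delicate, is to write $\tilde\Phi_{N_{conv},j}(\sigma')$ as a convex combination of the vertex matrices $V_{i,j}:=\sum_{k=0}^{i}L_{k,j}T^k$, where $T=\bar\sigma/l$ for the interior slices and $T=\munderbar\tau_s-j\bar\sigma/l$ for the final slice. Writing $\tau:=\sigma'/T\in[0,1]$, one verifies by telescoping that
\begin{equation*}
\tilde\Phi_{N_{conv},j}(\sigma') \;=\; \sum_{i=0}^{N_{conv}-1} V_{i,j}\,\tau^{i}(1-\tau) \;+\; V_{N_{conv},j}\,\tau^{N_{conv}},
\end{equation*}
with nonnegative weights summing to one. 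These vertex matrices coincide exactly with $\munderbar{\hat\Phi}_{(i,j),s}$ of \eqref{my_etc20}, distinguishing the interior slices from the final one by the choice of $T$.

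Finally I would combine the pieces: the hypothesis $x^T\munderbar\Phi_{(i,j),s}x\leq 0$ for all $(i,j)\in\mathcal{K}_s$ reads $x^T V_{i,j}x\leq -\munderbar\nu|x|^2$; taking the convex combination just derived preserves this inequality, so $x^T\tilde\Phi_{N_{conv},j}(\sigma')x\leq -\munderbar\nu|x|^2$. Adding the truncation-error bound yields $x^T\Phi(\sigma)x\leq 0$ on every slice, hence on the whole interval $[0,\munderbar\tau_s]$. The main obstacle is essentially the convex-hull representation of the truncated polynomial by partial sums, plus the careful handling of the shorter last slice so that \eqref{my_etc20} is applied with the correct length $\munderbar\tau_s-j\bar\sigma/l$.
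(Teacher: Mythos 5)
Your proposal is correct and follows the same route as the paper's proof: slice $[0,\bar{\sigma}]$ into $l$ subintervals with a shorter final piece ending at $\munderbar{\tau}_s$, rewrite $\Lambda(\sigma)$ via $\int_0^\sigma e^{Ar}dr=M_j+N_j\int_0^{\sigma'}e^{Ar}dr$ to get the power series $\sum_k L_{k,j}\sigma'^k$, absorb the truncation remainder into $\munderbar{\nu}I$ through the maximal eigenvalue bound, and finish with the polytopic (convex) embedding of the truncated polynomial. The only difference is that where the paper delegates the last step to the cited convex-embedding result of Hetel et al., you prove it directly via the telescoping identity $\tilde{\Phi}_{N_{conv},j}(\sigma')=\sum_{i=0}^{N_{conv}-1}V_{i,j}\,\tau^{i}(1-\tau)+V_{N_{conv},j}\,\tau^{N_{conv}}$ with $\tau=\sigma'/T\in[0,1]$, which indeed gives nonnegative weights summing to one and reproduces the partial sums $\munderbar{\hat{\Phi}}_{(i,j),s}$ as vertices, so your argument is a self-contained version of the same proof.
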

\begin{proof}
See Appendix.
\end{proof}

Then leveraging the S-procedure the following theorem provides an approach to regionally reduce the conservatism involved in $\munderbar{\tau}_s$ estimates obtained from Lemma~\ref{lem3}. 

\begin{theorem}[Regional Lower Bound Approximation]
\label{them3}
Consider the inter-sampling time set $\{\munderbar{\tau}_1, \ldots, \munderbar{\tau}_q\}$ and matrices $\munderbar{\Phi}_{\kappa,s}$ satisfying $\forall s \in \{ 1, \ldots, q \}$, $\forall\, \kappa=(i,j) \in \mathcal{K}_s$, 
$0<\munderbar{\tau}_s \leq \bar{\sigma}$, $\munderbar{\Phi}_{\kappa,s}  \preceq 0$. If there exist scalars $\munderbar{\varepsilon}_{\kappa,s}\geq 0$ (for $n=2$) or symmetric matrices $\munderbar{U}_{\kappa,s}$ with nonnegative entries (for $n \geq 3$) such that the LMIs 
\begin{equation}
\begin{array}{l}
\munderbar{\Phi}_{\kappa,s}+\munderbar{\varepsilon}_{\kappa,s} Q_s \preceq 0 \; \; \mbox{~if~} \; \; n=2
\end{array}
\label{my_etc32}
\end{equation}
\begin{equation}
\begin{array}{l}
\munderbar{\Phi}_{\kappa,s}+E_s^{T} \munderbar{U}_{\kappa,s} E_s \preceq 0  \; \; \mbox{~if~} \; \; n\geq 3
\end{array}
\label{my_etc33}
\end{equation}
hold, then $\forall x \in \mathcal{R}_s$ determined by (\ref{eq:poly_rep}) the inter-sample time (\ref{tabtrig}) of the control system (\ref{my_etc1}-\ref{my_etc6}) is regionally bounded from below by $\munderbar{\tau}_s$.
\end{theorem}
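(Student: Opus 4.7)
The plan is to reduce Theorem \ref{them3} to Lemma \ref{lem3} by showing that the LMI conditions~(\ref{my_etc32})--(\ref{my_etc33}) regionally imply the hypothesis $x^T \munderbar{\Phi}_{(i,j),s} x \leq 0$ of that lemma for every $x \in \mathcal{R}_s$ and every $\kappa \in \mathcal{K}_s$. Once the pointwise inequality $x^T \Phi(\sigma) x \leq 0$ holds throughout $\sigma \in [0,\munderbar{\tau}_s]$ for every $x$ in the region, the conclusion will follow directly from the definition~(\ref{my_etc18_1}) of $\tau(x)$, since the first positive zero of $\sigma \mapsto x^T \Phi(\sigma) x$ cannot occur before $\munderbar{\tau}_s$.

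First I would treat the case $n=2$. Here $\mathcal{R}_s = \{x\in\mathbb{R}^2 : x^T Q_s x \geq 0\}$ is described by a single quadratic inequality, so the classical (lossless, in the quadratic case) S-procedure applies: if (\ref{my_etc32}) holds with $\munderbar{\varepsilon}_{\kappa,s} \geq 0$, then for any $x\in\mathcal{R}_s$,
\begin{equation*}
x^T \munderbar{\Phi}_{\kappa,s} x \;\leq\; -\munderbar{\varepsilon}_{\kappa,s}\, x^T Q_s x \;\leq\; 0.
\end{equation*}
This gives the required hypothesis of Lemma \ref{lem3} for every $\kappa\in\mathcal{K}_s$, whence $x^T\Phi(\sigma)x\leq 0$ on $[0,\munderbar{\tau}_s]$ and therefore $\tau(x)\geq \munderbar{\tau}_s$.

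For the case $n\geq 3$ the region $\mathcal{R}_s = \{x\in\mathbb{R}^n: E_s x \geq 0\}$ is described by a collection of linear inequalities and the single-multiplier S-procedure is not directly available; this is the step I expect to be the main obstacle, and it is where the ``matrix multiplier'' $\munderbar{U}_{\kappa,s}$ (borrowed from the multiple-Lyapunov-function approach of~\cite{Deca} invoked in the paper) plays its role. The key observation is that if $y := E_s x$ has nonnegative entries and $\munderbar{U}_{\kappa,s}$ is symmetric with nonnegative entries, then
\begin{equation*}
x^T E_s^T \munderbar{U}_{\kappa,s} E_s x \;=\; \sum_{i,j} (\munderbar{U}_{\kappa,s})_{ij}\, y_i y_j \;\geq\; 0,
\end{equation*}
since every summand is a product of nonnegative numbers. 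Combining this with (\ref{my_etc33}) yields $x^T \munderbar{\Phi}_{\kappa,s} x \leq -x^T E_s^T \munderbar{U}_{\kappa,s} E_s x \leq 0$ for all $x\in\mathcal{R}_s$ and all $\kappa\in\mathcal{K}_s$.

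Finally, with the hypothesis of Lemma \ref{lem3} verified on $\mathcal{R}_s$, I invoke the lemma to obtain $x^T\Phi(\sigma)x\leq 0$ for all $\sigma\in[0,\munderbar{\tau}_s]$ and all $x\in\mathcal{R}_s$. By the definition of the triggering time in (\ref{my_etc18_1}), $\tau(x)$ is the first positive $\sigma$ at which $x^T\Phi(\sigma)x = 0$ (from below), so no triggering can occur strictly before $\munderbar{\tau}_s$, i.e.\ $\tau(x)\geq \munderbar{\tau}_s$ for every $x\in\mathcal{R}_s$, which is exactly the statement of the theorem.
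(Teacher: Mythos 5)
Your proposal is correct and follows essentially the same route as the paper's own proof: apply the S-procedure (lossless for $n=2$, lossy via the nonnegative-entry multiplier matrix for $n\geq 3$) to deduce $x^T \munderbar{\Phi}_{\kappa,s} x \leq 0$ on $\mathcal{R}_s$, then invoke Lemma~\ref{lem3} and the definition of $\tau(x)$ in (\ref{my_etc18_1}). The only difference is that you spell out the $n\geq 3$ multiplier argument explicitly, which the paper leaves implicit.
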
 

\begin{proof}
The result is a direct consequence of applying a lossless (if $n=2$) or a lossy (if $n \geq 3$) version of the S-procedure:
if the stated conditions hold then, for every $x\in\mathbb{R}^n$, $x^TQ_sx\geq 0$ ($x^TE_s^{T} \munderbar{U}_{\kappa,s} E_sx\geq 0$) implies that $x^{T} \munderbar{\Phi}_{\kappa,s}x\leq 0$.
From~(\ref{eq:poly_rep}) we have that $x^TQ_sx\geq 0$ ($E_sx\geq 0$) iff $x\in\mathcal{R}_s$. Thus one can conclude that for every $x\in\mathcal{R}_s$, $x^{T} \munderbar{\Phi}_{\kappa,s}x\leq 0$ and by Lemma~\ref{lem3} the result follows.
\end{proof}

A similar approach can be employed to compute upper bounds $\bar{\tau}_s$ of the regional inter-sample time.
This is summarized in Lemma \ref{lemm_u1} and Theorem \ref{them4}.

\begin{lemma}
\label{lemm_u1}
Consider a time limit $\bar{\tau}_s \in (\munderbar{\tau}_s,\bar{\sigma}]$. If
$$x^T \bar{\Phi}_{(i,j),s} x \geq 0,$$ 
$$\forall (i,j) \in \mathcal{K}_s=(\{0,\dots,N_{conv} \} \times \{ \lfloor \frac{\bar{\tau}_s l}{\bar{\sigma}} \rfloor ,\dots, l-1 \}),$$
then 
$$x^T \Phi(\sigma) x \geq 0,\, \forall \sigma \in [\bar{\tau}_s,\bar{\sigma}],$$ with 
$\Phi$ as in (\ref{my_etc18}) and
\begin{equation*}
\begin{array}{l}
\bar{\Phi}_{(i,j),s}:=\bar{\hat{\Phi}}_{(i,j),s}+ \bar{\nu} I, 
\end{array}
\end{equation*}

\begin{equation}
\begin{array}{l}
\bar{\hat{\Phi}}_{(i,j),s}:=\left\{\begin{array}{lc}
\sum_{k=0}^i \; L_{k,j} ((j+1)\frac{ \bar{\sigma}}{l}-\bar{\tau}_s )^k \; if  \; j = \lfloor \frac{\bar{\tau}_s l}{\bar{\sigma}} \rfloor, \\
\sum_{k=0}^i \; L_{k,j} (\frac{\bar{\sigma}}{l})^k  \; \; \; if \; j > \lfloor \frac{\bar{\tau}_s l}{\bar{\sigma}} \rfloor,
\end{array}\right. 
\end{array}
\label{my_etc20_1}
\end{equation}
\begin{equation}
\begin{array}{l}
\bar{\nu} \leq \underset{\sigma' \in [0,\frac{\bar{\sigma}}{l}],\; r \in \{ 0,\dots, l-1\}}{\max} \lambda_{\min} (\Phi(\sigma'+r\frac{\bar{\sigma}}{l})-\tilde{\Phi}_{N_{conv},r}(\sigma')), \\
\end{array}
\label{my_etc24_1}
\end{equation}
and $L_{k,j}$ given by (\ref{my_etc21}). 
\end{lemma}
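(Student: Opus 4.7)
The plan is to mirror the proof of Lemma~\ref{lem3} with all inequality signs reversed: on each Taylor subinterval I decompose $\Phi(\sigma)$ into a polynomial approximation plus a remainder, control the remainder uniformly through $\bar\nu$, and reduce the polynomial via a conic combination to the ``vertex'' matrices $\bar{\hat\Phi}_{(i,j),s}$. Combined with the shift $\bar\Phi_{(i,j),s}=\bar{\hat\Phi}_{(i,j),s}+\bar\nu I$, the hypothesized sign-definiteness of the $\bar\Phi_{(i,j),s}$ then suffices to force $x^T\Phi(\sigma)x\geq 0$ on $[\bar\tau_s,\bar\sigma]$.

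Setting $h:=\bar\sigma/l$, I first partition $[0,\bar\sigma]$ into the $l$ Taylor subintervals $[jh,(j+1)h]$ and write $\Phi(jh+\sigma')=\tilde\Phi_{N_{conv},j}(\sigma')+R_j(\sigma')$ for $\sigma'\in[0,h]$, with $\tilde\Phi_{N_{conv},j}$ the polynomial of~(\ref{my_etc25}) and $R_j$ its Taylor remainder. The defining inequality~(\ref{my_etc24_1}) directly yields $R_j(\sigma')\succeq\bar\nu I$ uniformly in $j$ and $\sigma'$. For any full subinterval (i.e.\ $j>\lfloor\bar\tau_s l/\bar\sigma\rfloor$), setting $t:=\sigma'/h\in[0,1]$ and $S_i:=\sum_{k=0}^{i}L_{k,j}h^k=\bar{\hat\Phi}_{(i,j),s}$, Abel summation by parts gives
$$\tilde\Phi_{N_{conv},j}(\sigma')=S_{N_{conv}}\,t^{N_{conv}}+\sum_{i=0}^{N_{conv}-1}S_i\,t^i(1-t).$$
Since $t^{N_{conv}}$ and $t^i(1-t)$ are nonnegative on $[0,1]$, this is a conic combination of the $\bar{\hat\Phi}_{(i,j),s}$; using $\bar\Phi_{(i,j),s}=\bar{\hat\Phi}_{(i,j),s}+\bar\nu I$, the hypothesis $x^T\bar\Phi_{(i,j),s}x\geq 0$ translates to $x^T\bar{\hat\Phi}_{(i,j),s}x\geq-\bar\nu|x|^2$, which propagates through the conic combination to $x^T\tilde\Phi_{N_{conv},j}(\sigma')x\geq -\bar\nu|x|^2$, and finally to $x^T\Phi(\sigma)x=x^T\tilde\Phi_{N_{conv},j}(\sigma')x+x^TR_j(\sigma')x\geq 0$ for every $\sigma$ in the full subinterval.

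The main obstacle is the boundary subinterval $j=\lfloor\bar\tau_s l/\bar\sigma\rfloor$, where only $\sigma\in[\bar\tau_s,(j+1)h]$ matters, so $\sigma'$ is restricted to $[\bar\tau_s-jh,h]$ rather than $[0,h]$. Unlike the symmetric boundary case of Lemma~\ref{lem3}, this restricted range is not anchored at $\sigma'=0$, so the direct substitution $t=\sigma'/h$ no longer produces the desired partial-sum structure. To match the lemma's formula with $\Delta:=(j+1)h-\bar\tau_s$, I would rewrite $\tilde\Phi_{N_{conv},j}(\sigma')$ as a polynomial in the shifted variable $\tau:=\sigma-\bar\tau_s\in[0,\Delta]$ (via binomial expansion around $\delta:=\bar\tau_s-jh$) and then apply Abel summation with the new parameter $t:=\tau/\Delta\in[0,1]$. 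Verifying that the conic combination produced by this re-parameterization is indeed implied by the stated vertex matrices $\bar{\hat\Phi}_{(i,j),s}=\sum_{k=0}^{i}L_{k,j}\Delta^k$ is the bookkeeping-heavy core of the argument; once this is in place, the same remainder/Abel/shift-by-$\bar\nu$ chain as in the previous paragraph closes the boundary case and delivers $x^T\Phi(\sigma)x\geq 0$ on all of $[\bar\tau_s,\bar\sigma]$.
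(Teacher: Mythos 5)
Your overall strategy is exactly the paper's: the published proof is literally ``analogous to Lemma~\ref{lem3} with the inequalities reversed'' --- same subinterval decomposition, same remainder bound $R_j(\sigma')\succeq\bar\nu I$ from (\ref{my_etc24_1}), same polytopic embedding of~\cite{Het} --- and your Abel-summation identity is precisely the explicit form of that embedding, so your treatment of the full subintervals $j>\lfloor\bar\tau_s l/\bar\sigma\rfloor$ is correct and in fact more detailed than the paper's. (One small caveat: you say (\ref{my_etc24_1}) ``directly yields'' the uniform bound $R_j(\sigma')\succeq\bar\nu I$; as printed, with a $\max$ over $(\sigma',r)$, it does not --- the uniform lower bound needs a $\min$ there, and you should either read the formula that way or flag it.)

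The genuine gap is the one you yourself identify and then defer: the boundary subinterval $j=\lfloor\bar\tau_s l/\bar\sigma\rfloor$. There $\sigma'=\sigma-j\bar\sigma/l$ ranges over $[\bar\tau_s-j\bar\sigma/l,\;\bar\sigma/l]$, an interval not anchored at $0$, while the lemma's vertex matrices $\sum_{k\le i}L_{k,j}\Delta^k$ with $\Delta=(j+1)\bar\sigma/l-\bar\tau_s$ are the partial sums that certify the embedding only on $[0,\Delta]$, which does not contain $[\bar\tau_s-j\bar\sigma/l,\;\bar\sigma/l]$ (indeed the two intervals are disjoint whenever $\bar\tau_s-j\bar\sigma/l>\bar\sigma/(2l)$). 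Your proposed repair --- re-expanding the polynomial around $\bar\tau_s$ and applying Abel summation in $\tau/\Delta$ --- yields vertices $\sum_{k\le i}\tilde L_k\Delta^k$ with binomially shifted coefficients $\tilde L_k=\sum_{m\ge k}\binom{m}{k}\delta^{m-k}L_{m,j}$, $\delta=\bar\tau_s-j\bar\sigma/l$, and there is no reason the stated $\bar{\hat{\Phi}}_{(i,j),s}$ imply nonnegativity of these; so the ``bookkeeping'' you postpone is exactly the step on which the lemma stands or falls, and it does not close as sketched. (To be fair, the paper's own proof simply asserts $\sigma'\in[0,\chi]$ for this index, so you have correctly isolated the delicate point rather than missed it.) A clean way to finish your argument is to treat the boundary subinterval as a full one, i.e.\ take $\chi=\bar\sigma/l$ there, since $[\bar\tau_s-j\bar\sigma/l,\;\bar\sigma/l]\subset[0,\bar\sigma/l]$; this proves the conclusion under a marginally stronger hypothesis than the one displayed in (\ref{my_etc20_1}).
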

\begin{proof}
See appendix. \end{proof}

\begin{theorem}[Regional Upper Bound Approximation]
\label{them4}
Consider the inter-sampling time set $\{ \bar{\tau}_1, \ldots, \bar{\tau}_q \}$ 
and matrices $\bar{\Phi}_{\kappa,s}$ satisfying $\forall s \in \{ 1, \ldots, q \}$, $\forall \, \kappa=(i,j) \in \mathcal{K}_s=(\{0,\dots,N_{conv} \} \times \{\lfloor \frac{\bar{\tau}_s l}{\bar{\sigma}} \rfloor,\dots,l-1 \})$, 
$\munderbar{\tau}_s <\bar{\tau}_s \leq \bar{\sigma}$, $ \bar{\Phi}_{\kappa,s}  \succeq 0$.
If there exist scalars $\bar{\varepsilon}_{\kappa,s}\geq 0$ (for $n=2$) or symmetric matrices $\bar{U}_{\kappa,s}$ with nonnegative entries (for $n \geq 3$) such that the LMIs 
\begin{equation}
\begin{array}{l}
\bar{\Phi}_{\kappa,s}-\bar{\varepsilon}_{\kappa,s} Q_s \succeq 0 \; \; \mbox{~if~} \; \; n=2
\end{array}
\label{my_etc40}
\end{equation}
\begin{equation}
\begin{array}{l}
\bar{\Phi}_{\kappa,s}-E_s^{T} \bar{U}_{\kappa,s} E_s \succeq 0  \; \; \mbox{~if~} \; \; n\geq 3
\end{array}
\label{my_etc41}
\end{equation}
hold, then $\forall x \in \mathcal{R}_s$ determined by (\ref{eq:poly_rep}) the inter-sampling time (\ref{tabtrig}) of the control system (\ref{my_etc1}-\ref{my_etc6}) is regionally bounded from above by $\bar{\tau}_s$.
\end{theorem}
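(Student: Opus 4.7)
The plan is to mirror the proof of Theorem~\ref{them3} with every inequality reversed and with Lemma~\ref{lemm_u1} playing the role of Lemma~\ref{lem3}. First I would apply the S-procedure to the LMIs (\ref{my_etc40})--(\ref{my_etc41}): for $n=2$, the lossless S-procedure guarantees that the existence of a nonnegative multiplier $\bar{\varepsilon}_{\kappa,s}$ with $\bar{\Phi}_{\kappa,s}-\bar{\varepsilon}_{\kappa,s}Q_s\succeq 0$ enforces the implication $x^TQ_sx\geq 0 \Rightarrow x^T\bar{\Phi}_{\kappa,s}x\geq 0$, while for $n\geq 3$ the lossy version with an entrywise nonnegative matrix $\bar{U}_{\kappa,s}$ yields the analogous implication starting from $E_sx\geq 0$, using that $x^TE_s^T\bar{U}_{\kappa,s}E_sx$ inherits nonnegativity from $E_sx\geq 0$. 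By the cone representations in (\ref{eq:poly_rep}) both premises translate into $x\in\mathcal{R}_s$, so one obtains $x^T\bar{\Phi}_{\kappa,s}x\geq 0$ for every $x\in\mathcal{R}_s$ and every $\kappa\in\mathcal{K}_s$.

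With this regional sign condition in hand, I would feed it pointwise in $x$ into Lemma~\ref{lemm_u1} to conclude $x^T\Phi(\sigma)x\geq 0$ for all $\sigma\in[\bar{\tau}_s,\bar{\sigma}]$ and all $x\in\mathcal{R}_s$. To convert this into an upper bound on $\tau(x)$ as defined in (\ref{my_etc18_1}), I would use $\Lambda(0)=I$ from (\ref{my_etc16}), which gives $\Phi(0)=-\alpha I\prec 0$ and hence $x^T\Phi(0)x<0$ for every nonzero $x$. Continuity of the scalar map $\sigma\mapsto x^T\Phi(\sigma)x$ together with its sign change between $\sigma=0$ and $\sigma=\bar{\tau}_s$ yields, via the intermediate value theorem, at least one zero in $(0,\bar{\tau}_s]$. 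Since $\tau(x)$ is by definition the smallest positive zero of this map, $\tau(x)\leq\bar{\tau}_s$, which is exactly the regional upper bound claimed.

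I expect the main obstacle to be bookkeeping rather than mathematics. The hypothesis $\bar{\Phi}_{\kappa,s}\succeq 0$ stated globally in the theorem must be kept cleanly separate from the pointwise inequality $x^T\bar{\Phi}_{\kappa,s}x\geq 0$ that the S-procedure actually delivers over the cone $\mathcal{R}_s$, and the IVT step must be phrased so that even a tangent crossing at $\bar{\tau}_s$ still certifies $\tau(x)\leq\bar{\tau}_s$. The degenerate case $x=0$ is handled by convention. Beyond these points, every step parallels the proof of Theorem~\ref{them3}, so no additional machinery beyond what was used there is required.
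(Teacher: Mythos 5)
Your proposal is correct and follows essentially the same route as the paper, which simply invokes the S-procedure argument of Theorem~\ref{them3} (with reversed inequalities) together with Lemma~\ref{lemm_u1}. Your final step — using $\Phi(0)=-\alpha I\prec 0$ and the intermediate value theorem to pass from $x^T\Phi(\sigma)x\geq 0$ on $[\bar{\tau}_s,\bar{\sigma}]$ to $\tau(x)\leq\bar{\tau}_s$ — is a detail the paper leaves implicit, and you supply it correctly.
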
 

\begin{proof}
Analogous to the proof of Theorem~\ref{them3}. \end{proof}

A procedural algorithm illustrating the use of Theorems \ref{them3} and \ref{them4} in practice to compute $\munderbar{\tau}_s$ and $\bar{\tau}_s$ is as follows: 
\begin{enumerate}
\item Derive $\munderbar{\nu}$ using (\ref{my_etc24}). Consider $s=1$  in Theorem \ref{them3}. Implement a line search on $\munderbar{\tau}'$ in the interval $[0,\bar{\sigma}]$ to find the lower bound on the inter-sample time in the whole state space. 
\item Partition the state space into $q$ regions and again use Theorem \ref{them3} to find $\{\munderbar{\tau}_1,\dots,\munderbar{\tau}_q \}$. This optimization problem is a line search on $\munderbar{\tau}_s$ in the time interval $[\munderbar{\tau}',\bar{\sigma}]$ combined with LMI feasibility problems on $\munderbar{\varepsilon}_{\kappa,s}$ or $\munderbar{U}_{\kappa,s}$ at each line search iteration. 
\item Derive $\bar{\nu}$ using (\ref{my_etc24_1}) and then use Theorem \ref{them4} to find $\bar{\tau}_s$. This problem is a combination of a line search on $\bar{\tau}_s$ in the time interval $[\munderbar{\tau}_s, \bar{\sigma}]$ with LMI feasibility problems on $\bar{\varepsilon}_{\kappa,s}$ or $\bar{U}_{\kappa,s}$. 
\end{enumerate}

\begin{remark}
\label{rem_sig}
To the best of our knowledge, there is no formal result establishing an upper bound $\bar{\sigma}$ to $\munderbar{\tau}_s$ and $\bar{\tau}_s$. In practice, nonetheless, a line search is sufficient: increasing the value of $\bar{\sigma}$ until the derived values of $\munderbar{\tau}_s$ and $\bar{\tau}_s$ satisfy $\munderbar{\tau}_s, \bar{\tau}_s < \bar{\sigma}$.
\end{remark}

\subsection{Transition Relation}
\label{ssec:tr}
In this subsection, we address the construction of the transition map ${\underset{/\mathcal{P}} {\longrightarrow}}$. To do so, each conic region is considered as a discrete mode of $\mathcal{S}_{/\mathcal{P}}$ and a reachability analysis is catered to derive all possible transitions in $\mathcal{S}_{/\mathcal{P}}$. 


We consider initial sets $X_{0,s}$ in each $\mathcal{R}_s$ which are polytopes with their vertices lying on the extreme rays of $\mathcal{R}_s$ excluding the origin, e.g. for a two dimensional system the initial sets will be line segments connecting the boundary rays of the respective $\mathcal{R}_s$ sets. 
This selection is justified by the following facts: states that lie on a line going through the origin have an identical triggering behavior (see Remark \ref{rem_1}); these states are mapped to another line that goes through the origin (since the state evolution (\ref{my_etc15}) is a linear map on $x$); and the image of a convex polytope under a linear map is another convex polytope.


Given $X_{0,s}$ in  $\mathcal{R}_s$ with corresponding $\munderbar{\tau}_s$ and $\bar{\tau}_s$, we seek first to derive a polytopic outer approximation $\hat{\mathcal{X}}_{[\munderbar{\tau}_s,\bar{\tau}_s]}(X_{0,s})$ to the flow pipe $\mathcal{X}_{[\munderbar{\tau}_s,\bar{\tau}_s]}(X_{0,s})$, i.e. $$\mathcal{X}_{[\munderbar{\tau}_s,\bar{\tau}_s]}(X_{0,s}) \subseteq \hat{\mathcal{X}}_{[\munderbar{\tau}_s,\bar{\tau}_s]}(X_{0,s}).$$ 

Several methods can be catered to solve this problem, in particular we employ in our implementations the approach from~\cite{Chut}.
A generic approach to reduce the conservatism of the approximation, also employed in~\cite{Chut}, is to divide the time interval $[\munderbar{\tau}_s,\bar{\tau}_s]$ into $\bar{f}$ subintervals and compute an over approximation  $\hat{\mathcal{X}}_{[\munderbar{\tau}_s,\bar{\tau}_s]}(X_{0,s})$ as the union of a sequence of convex polytopes $\hat{\mathcal{X}}_{[t_f,t_{f+1}]}(X_{0,s})$, $\forall f \in \{1,\dots,\bar{f} \}$ with $t_1=\munderbar{\tau}_s$ and $t_{\bar{f}+1}=\bar{\tau}_s$:

\begin{equation}
\begin{array}{l}
\hat{\mathcal{X}}_{[\munderbar{\tau}_s,\bar{\tau}_s]}(X_{0,s})= \underset{f}{\cup} \; \hat{\mathcal{X}}_{[t_f,t_{f+1}]}(X_{0,s}).
\end{array}
\label{my_etc50}
\end{equation}

In order to derive the transitions in $\mathcal{S}_{/\mathcal{P}}$, the intersection between the flow pipe corresponding to each region $\mathcal{R}_s$ and the initial conic regions need to be founded. This can be done by solving the following feasibility problem for each pair of conic regions $(\mathcal{R}_s, \mathcal{R}_{s'})$:

\begin{equation}
\begin{array}{ll}
\text{Feas} & C_{f,s} \bar{x} \leq d_{f,s}^{\star}\\
 & E_{s'} \bar{x} \geq 0
\end{array}
\label{my_etc51}
\end{equation}
where $\{\bar{x} \in \mathbb{R}^n| \; C_{f,s} \bar{x} \leq d_{f,s}^{\star} \}$ represents the outer approximation of the $f$-th flow pipe segment associated with $\mathcal{R}_s$, and $\{\bar{x} \in \mathbb{R}^n| \; E_{s'} \bar{x} \geq 0 \}$ is the initial conic region where $s' \in \{ 1, \dots, q \}$. The feasibility of (\ref{my_etc51}) for any value of $f \in \{1,\dots, \bar{f} \}$ indicates that there is a transition from mode $s$ to mode $s'$ in $\mathcal{S}_{/\mathcal{P}}$.

\section{Main Results}
\label{sec_5}

\subsection{$\epsilon$-Approximate Simulation Relation}

We summarize in the following theorem the main result of the paper.
\begin{theorem}
The metric system $$\mathcal{S}_{/\mathcal{P}}=(X_{/\mathcal{P}},X_{0/\mathcal{P}},U_{/\mathcal{P}}, \underset{/\mathcal{P}}{\longrightarrow},Y_{/\mathcal{P}},H_{/\mathcal{P}}),$$  $\varepsilon$-approximately simulates $\mathcal{S}$ where $\epsilon= \max d_H(y,y')$, $y=H(x) \in Y, y'=H_{/\mathcal{P}}(x') \in Y/\mathcal{P}$, $\forall (x,x') \in \mathcal{P}$, and $d_H(\cdot,\cdot)$ is the Hausdorff  distance. 
\end{theorem}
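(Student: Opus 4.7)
The plan is to observe that the statement is essentially an instance of Lemma \ref{lem_app_sim} applied to the specific equivalence relation $\mathcal{P}$ whose equivalence classes are the conic regions $\mathcal{R}_1,\dots,\mathcal{R}_q$ constructed in Section \ref{sec_4}. So the proof reduces to verifying: (i) that $\mathcal{S}_{/\mathcal{P}}$ as defined in the Problem Statement coincides (up to notation) with the power quotient system of $\mathcal{S}$ by $\mathcal{P}$ in the sense of Definition \ref{def_quo}, and (ii) that the constant $\epsilon=\max d_H(y,y')$ indeed upper bounds $d_H(H(x),H_{/\mathcal{P}}(x_{/\mathcal{P}}))$ over all $x\in x_{/\mathcal{P}}$. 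Once these are in place, Lemma \ref{lem_app_sim} yields the result.

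For (i), I would match each item of Definition \ref{def_quo} against the construction in the Problem Statement: $X_{/\mathcal{P}}=\{\mathcal{R}_1,\dots,\mathcal{R}_q\}$ is exactly $X/\mathcal{P}$; $X_{0/\mathcal{P}}=\{\mathcal{R}_i\mid X_0\cap \mathcal{R}_i\neq\emptyset\}$ matches the definition of initial states in the power quotient; the transition relation $\underset{/\mathcal{P}}{\longrightarrow}$ is specified exactly by the existential condition in Definition \ref{def_quo}; and $H_{/\mathcal{P}}(x_{/\mathcal{P}})=[\underline{\tau}_{x_{/\mathcal{P}}},\bar{\tau}_{x_{/\mathcal{P}}}]\supseteq\bigcup_{x\in x_{/\mathcal{P}}}\{H(x)\}$, which is consistent with the union in Definition \ref{def_quo} together with the fact that $Y_{/\mathcal{P}}\subset 2^Y$ is taken to live in $\mathbb{IR}^+$. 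The crucial observation here is that Theorems \ref{them3} and \ref{them4} guarantee $\tau(x)\in[\underline{\tau}_s,\bar{\tau}_s]$ for every $x\in\mathcal{R}_s$, so the interval $H_{/\mathcal{P}}(\mathcal{R}_s)$ contains the set $\{\tau(x)\mid x\in\mathcal{R}_s\}$.

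For (ii), I would invoke the Hausdorff distance interpretation: for each $x_{/\mathcal{P}}=\mathcal{R}_s$ and $x\in\mathcal{R}_s$, one has $H(x)=\tau(x)\in[\underline{\tau}_s,\bar{\tau}_s]=H_{/\mathcal{P}}(x_{/\mathcal{P}})$, so using the embedding $Y\hookrightarrow 2^Y$, the Hausdorff distance between $\{H(x)\}$ and $H_{/\mathcal{P}}(x_{/\mathcal{P}})$ is at most $\bar{\tau}_s-\underline{\tau}_s$. Taking the maximum over all pairs $(x,x_{/\mathcal{P}})$ with $x\in x_{/\mathcal{P}}$ yields precisely the $\epsilon$ in the theorem statement, which satisfies the hypothesis of Lemma \ref{lem_app_sim}. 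Applying that lemma, with the candidate simulation relation $R_s=\{(x,x_{/\mathcal{P}})\mid x\in x_{/\mathcal{P}}\}$, discharges the three conditions of Definition \ref{def_asr} in the usual way.

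The only point that demands some care, and which I would flag as the main subtlety, is the transition preservation condition (the third bullet of Definition \ref{def_asr}). Given $(x,x')\in\longrightarrow$ in $\mathcal{S}$, i.e.\ $\xi_x(\tau(x))=x'$, we need a transition $(x_{/\mathcal{P}},x'_{/\mathcal{P}})\in\underset{/\mathcal{P}}{\longrightarrow}$ with $x\in x_{/\mathcal{P}}$ and $x'\in x'_{/\mathcal{P}}$. Since the regions cover $\mathbb{R}^n$, both $x$ and $x'$ lie in some conic cells, and the existential definition of $\underset{/\mathcal{P}}{\longrightarrow}$ makes that pair a valid abstract transition by construction. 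The outer polytopic flow-pipe approximation used in Section \ref{ssec:tr} may introduce spurious abstract transitions, but it is precisely this over-approximation that ensures no concrete transition is omitted, which is what a simulation (as opposed to bisimulation) requires. This one-sided preservation is the reason the result is stated as $\mathcal{S}\preceq^\epsilon\mathcal{S}_{/\mathcal{P}}$ rather than as an equivalence.
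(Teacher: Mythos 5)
Your proposal is correct and follows exactly the paper's route: the paper's own proof is the one-line observation that the theorem is ``a direct consequence of Lemma \ref{lem_app_sim} and the construction described in the previous section,'' and your argument simply spells out the details of that instantiation (matching Definition \ref{def_quo}, bounding the Hausdorff distance by $\bar{\tau}_s-\munderbar{\tau}_s$, and checking transition preservation via the existential definition of $\underset{/\mathcal{P}}{\longrightarrow}$). The extra care you take with the interval hull versus the exact union of outputs, and with the one-sidedness of the flow-pipe over-approximation, is a faithful elaboration rather than a different approach.
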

\begin{proof}
This is a direct consequence of Lemma \ref{lem_app_sim} and the construction described in the previous section.
\end{proof}

\begin{remark}
\label{coro_1}
Refining the conic regions of a given abstraction into a lager number of conic regions $\mathcal{R}_s$, $q$, the length of the sampling intervals in each region $|\bar{\tau}_s - \munderbar{\tau}_s|$ cannot increase. Thus, increasing $q$ results in abstractions with a new precision $\epsilon ' \leq \epsilon$ where $\epsilon '$ is computed based on the refined state space abstraction. 
Note that this is not necessarily the case if the partition of the state-space in the new abstraction is not a refinement of the original abstraction.
\end{remark}


\subsection{Timed-Automaton Representation}
\label{ssec:tar}

Finally, we show that $\mathcal{S_{/\mathcal{P}}}$ is in fact equivalent to a TSA. Regardless of TSAs uncountable state space (related to its clock variables), it has been shown that its reachability analysis is decidable \cite{Alu}. This fact makes TSA a powerful tool to model real-world systems, such as real-time systems, where discrete transitions are coupled with timing constraints. Several tools, exploring the attractive features of TSA, have been developed for verification and synthesis \cite{Alur} and \cite{Upp}. 

In the following, the semantics of $\mathcal{S}_{/\mathcal{P}}$ is spelled out. The corresponding output $y_{/\mathcal{P}} \in Y_{/\mathcal{P}}$ of a state $x_{/\mathcal{P}} \in X_{/\mathcal{P}}$ indicates that $\mathcal{S_{/\mathcal{P}}}$: 
\begin{enumerate}
\item remains at $x_{/\mathcal{P}}$ during the time interval $[0,\munderbar{\tau}_{x_{/\mathcal{P}}})$, 
\item possibly leaves $x_{/\mathcal{P}}$ during the time interval $[\munderbar{\tau}_{x_{/\mathcal{P}}},\bar{\tau}_{x_{/\mathcal{P}}})$, and
\item is forced to leave $x_{/\mathcal{P}}$ at $\bar{\tau}_{x_{/\mathcal{P}}}$.
\end{enumerate}


Assume for simplicity that $\mathcal{S}_{/\mathcal{P}}$ has a single initial state. Based on Definition \ref{def_TA} and the previous description, one can equivalently represent the same semantics expressed by $\mathcal{S}_{/\mathcal{P}}$ with the TSA $(L,L_0,\Sigma, \mathcal{C}, E,I)$ where: 
\begin{itemize}
\item the set of locations $L:=X_{/\mathcal{P}}=\{l_1,\dots,l_q \}$;
\item the initial location $L_0 :=X_{0/\mathcal{P}}$; 
\item the set of actions $\Sigma = \{*\}$ is an arbitrary labeling of discrete transitions (or edges);
\item the clock set $\mathcal{C}=\{c\}$ contains a single clock; 
\item the set of edges $E$ is such that $(l_s,g,a,r,l_{s'}) \in E$ iff $l_s \rTo_{/\mathcal{P}} l_{s'}$, $g=\{\munderbar{\tau}_s \leq c \leq \bar{\tau}_s \}$, $a=*$, and $r=\{ c:=0\}$;
\item the invariant map $I(l_s):=\{0\leq c \leq \bar{\tau}_s\}$, $\forall s \in \{1,\dots,q\}$.
\end{itemize}

\section{Numerical Example}
\label{sec_6} 
In this section, we illustrate the results of this study. Consider a linear system, employed in the example in~\cite{Tab_etc}, with state feedback law that is given by:

\begin{equation}
\begin{array}{l}
\dot{\xi}(t) = \left[ \begin{array}{cc}
0 & 1  \\ -2 & 3  \end{array} \right] \xi(t) +  \left[ \begin{array}{c}
0  \\ 1   \end{array} \right] \nu(t),\\
\\
\nu(t) = [1 \;\; -4]\xi(t).
\end{array}
\label{my_etc60}
\end{equation}

We set the triggering coefficient $\alpha = 0.05$, the order of polynomial approximation $N_{conv}=5$, the number of polytopic subdivisions $l=100$, the number of angular subdivisions $\bar{m}=10$ (hence, the number of conic regions is $q= 2 \times \bar{m}^{(n-1)}=2 \times 10^{(2-1)}=20$), the upper bound of the inter-sample interval  $\bar{\sigma}=1\sec$, and the time step employed to subdivide the flow pipe segments is $0.01\sec$.  

Figure \ref{etc_ex1_1} depicts $\munderbar{\tau}_s$ and $\bar{\tau}_s$ for the closed loop system given by (\ref{my_etc60}) using Theorem \ref{them3} and Theorem \ref{them4}, respectively. In Figure \ref{etc_ex1_9}, another representation of  $\munderbar{\tau}_s$ and $\bar{\tau}_s$ is given in the state space. Based on the Hausdorff distance and the derived boundaries, one gets that in this example the precision of the abstraction is $\epsilon=0.119$.

\begin{figure}[h]
 \centering
  \includegraphics[width=1
  \linewidth]{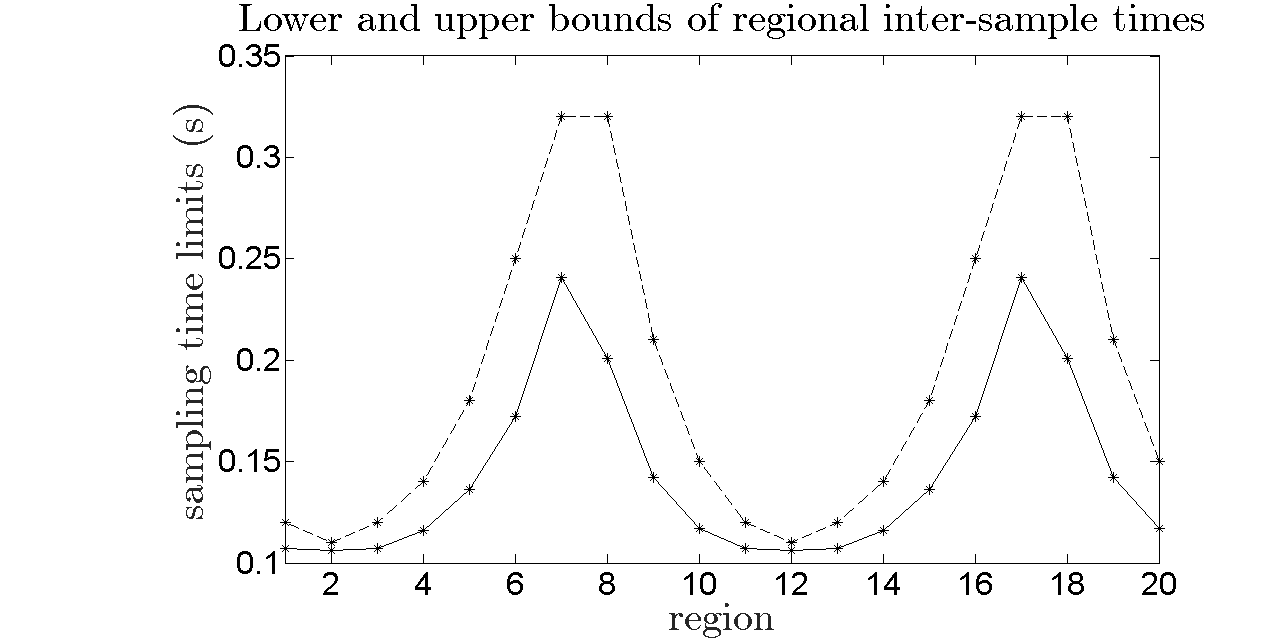}
 \caption{Lower and upper bounds approximation of regional inter-sample times are depicted by solid and dashed curves, respectively.} 
 \label{etc_ex1_1}
\end{figure}

\begin{figure}[h]
 \centering
  \includegraphics[width=1.1
  \linewidth]{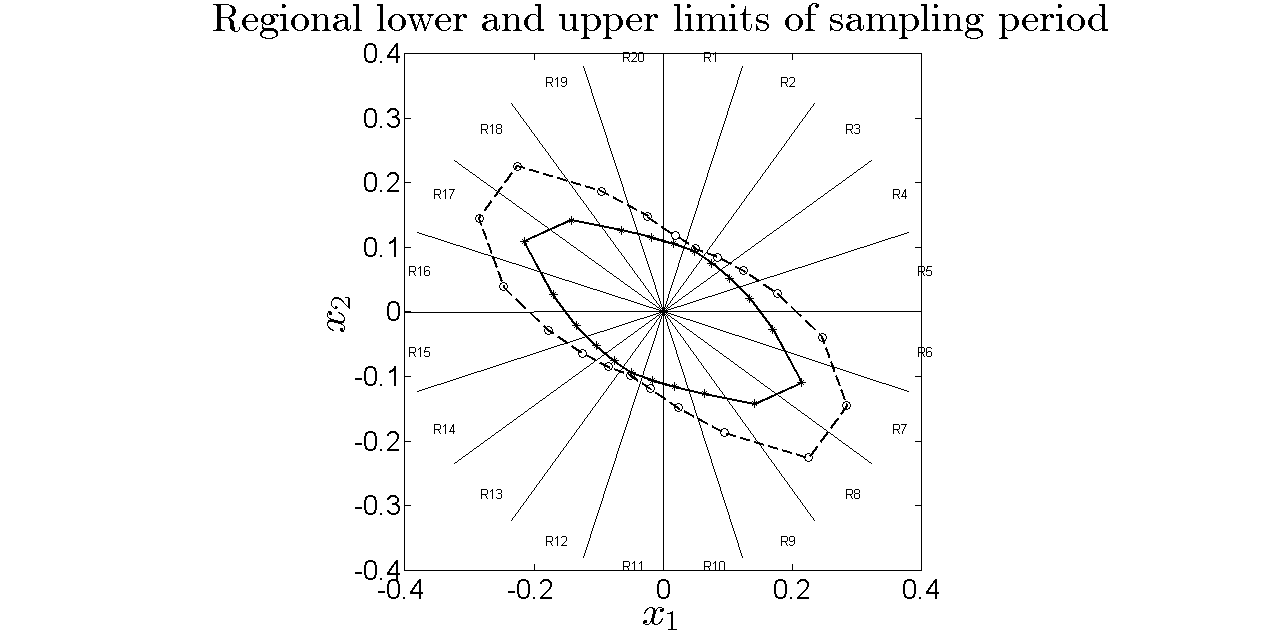}
 \caption{Lower and upper bounds approximation of regional inter-sample times are depicted by solid and dashed lines, respectively. The distance to the origin of each asterisk (circle) denotes the regional value of lower (upper) bound.} 
 \label{etc_ex1_9}
\end{figure}

Note that by increasing $\bar{m}$: $\munderbar{\tau}_s$ and $\bar{\tau}_s$ will possibly shift upward and downward, respectively. By doing so, one can derive tighter bounds for the given event-triggered control system. Thus, a more precise $\epsilon$-approximate simulation relation from $\mathcal{S}$ to $\mathcal{S}_{/\mathcal{P}}$ can be achieved. Figure \ref{etc_ex1_10} shows a comparison between two different state space abstraction with $\bar{m}=10$ and $100$. Evidently, partitioning half of the state space to $100$ regions instead of $10$ regions leads to tighter bounds where $\epsilon '=0.021$.  

\begin{figure}[h]
 \centering
  \includegraphics[width=0.9
  \linewidth]{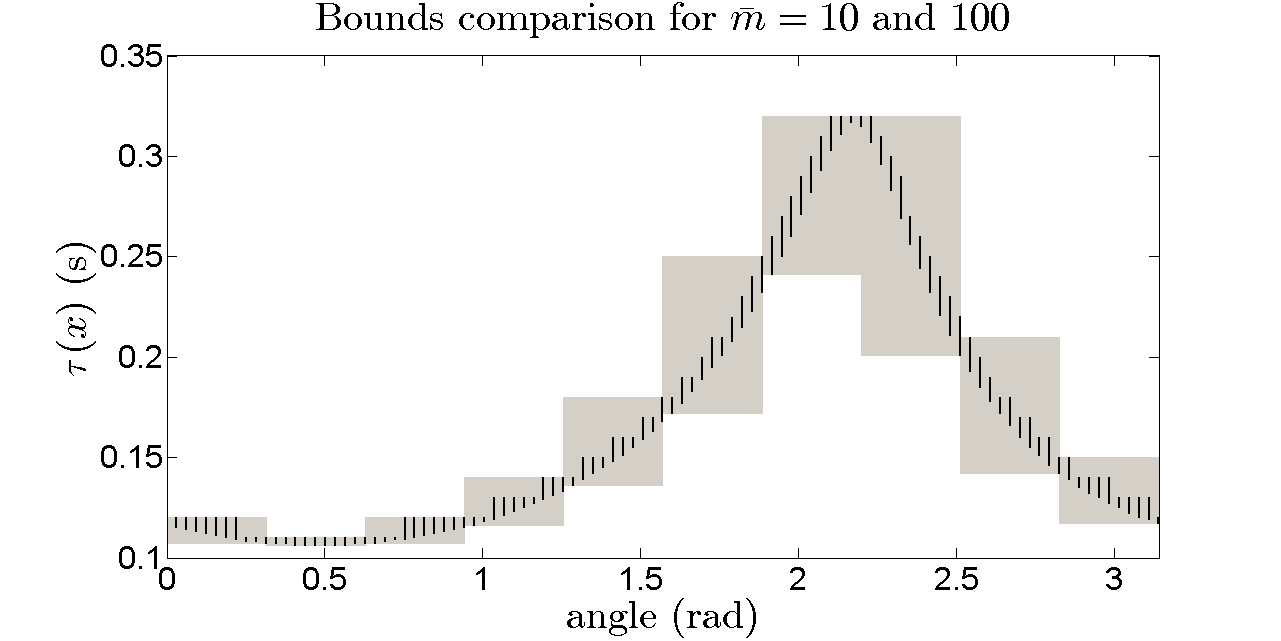}
 \caption{Comparison of two different state space abstractions with $\bar{m}=10$ (the light shaded area) and $\bar{m}=100$ (the dark dashed area) for half of the state space.} 
 \label{etc_ex1_10}
\end{figure}

Figure \ref{etc_ex1_3} illustrates the validity of the theoretical bounds that we found for $\munderbar{\tau}_s$ (black line) and $\bar{\tau}_s$ (dashed line). The asterisks represent the inter-sample times sequence during $5 \sec$ simulation of the event-triggered control system.

\begin{figure}[h]
 \centering
  \includegraphics[width=1
  \linewidth]{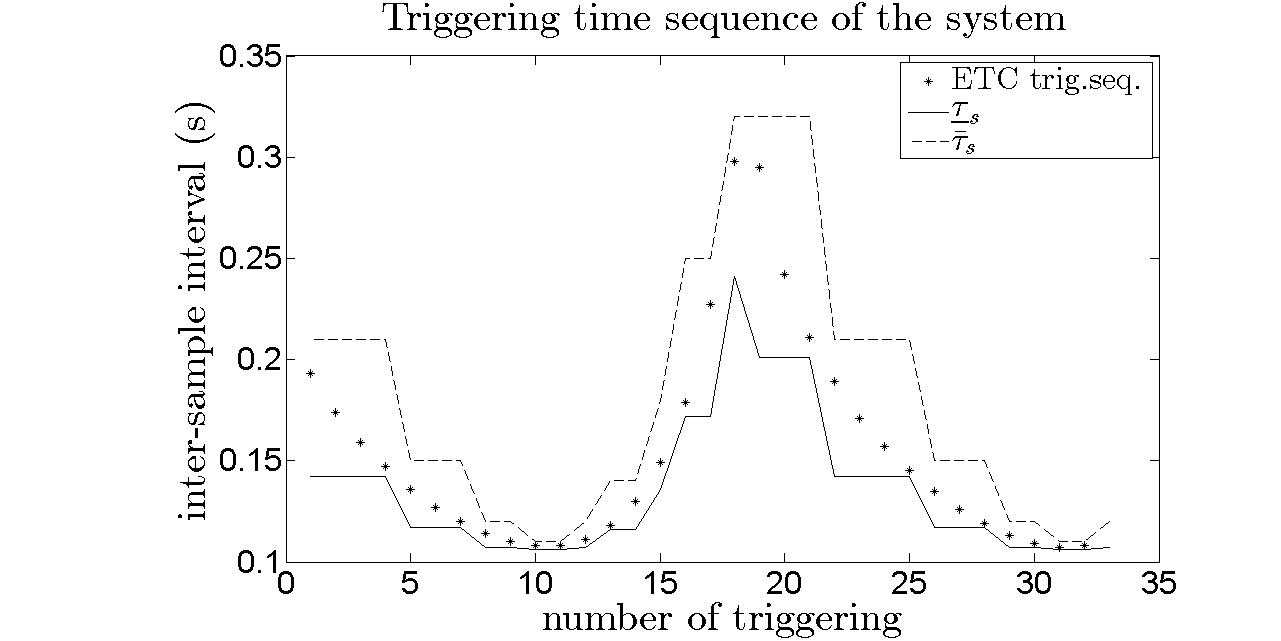}
 \caption{Schematic representation of the fact that the triggering sequence induced by simulation of the the event-triggered control system (asterisks) is contained in the theoretical lower and upper bounds (solid and dashed lines, respectively).} 
 \label{etc_ex1_3}
\end{figure}

In Figure \ref{etc_ex1_6}, a schematic representation of the discrete transition of the resulting TSA is provided. Each asterisk represents a transition from mode $i$ (or $\mathcal{R}_i$) to mode $j$ (or $\mathcal{R}_j$) where $i$ and $j$ are the horizontal and vertical coordinates of the asterisk, respectively.

\begin{figure}[h]
 \centering
  \includegraphics[width=1.1
  \linewidth]{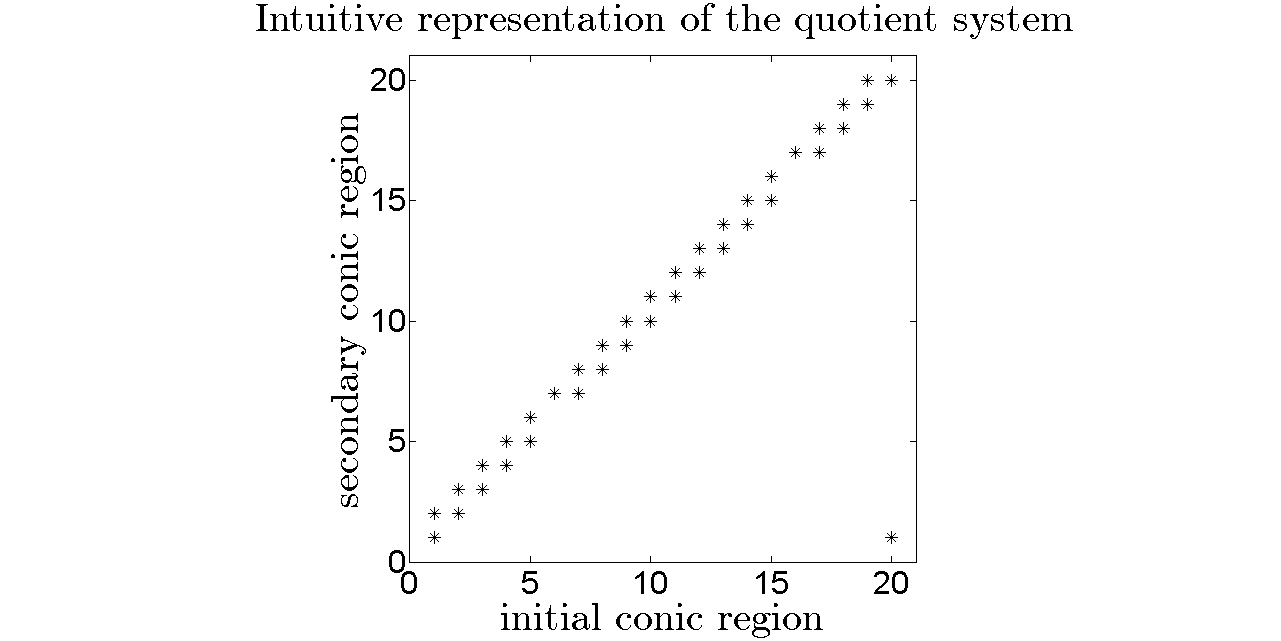}
 \caption{Schematic representation of the transitions in the hybrid automaton (or the timed automaton).} 
 \label{etc_ex1_6}
\end{figure}

\section{Conclusion}
\label{sec_7} 
We have proposed an LMI-based approach which combined with a reachability analysis allows to derive a timed safety automaton that captures the sampling behavior of LTI systems controlled with an event-triggered control strategy. It has been shown that there exists an $\epsilon$-approximate simulation relation from the control system to the derived timed-automaton. These results provide a theoretical basis for automatic schedulability analysis and scheduler synthesis. Thus, an interesting follow-up research direction is to exploit the already existing theory and synthesis tools for timed-automata to synthesize schedulers for event-triggered control loops. This is a significant contribution to the field of event-triggered control in terms of applicability. Until now the only possible schedulability check for a set of event-triggered control loops was limited to a conservative one employing the minimum inter-event interval as descriptor of each of the loops. In turn, such conservative analysis challenged the reasons to implement control loops in the event-triggered fashion versus the traditional periodic approach, as the advantages on bandwidth usage could not be catered to the network infrastructure. With our present results a less conservative check of schedulability is enabled allowing to effectively take advantage of the bandwidth savings on event-triggered implementations. Furthermore, given the direct trade-off between performance and bandwidth usage that the triggering mechanism provides, one can envision more flexible implementations of event-triggered control loops that gently adjust their performance to accommodate to network congestion. 
Additional follow-up lines of research are the extension of this abstraction results to systems affected by disturbances and to classes of non-linear systems. Finally, it seems also natural to explore the connection between the constructed abstractions in the form of TSAs and alternative abstractions employed to model aperiodic traffic in the literature such as the arrival curves employed on network and real-time calculus~\cite{thiele00}. Establishing the link between those two will enable the use of tools from real-time calculus to appropriately dimension and design networks for systems involving event-based control loops.

\appendix  
%


\textbf{Proof of Lemma \ref{lem3}:} We provide a constructive proof in four steps.

\textit{First:} divide the time interval $[0,\bar{\sigma}]$ into $l$ subintervals. The aim of this step is to make the preparations to compute a precise estimation of $\Phi(\cdot)$ by building $l$ small convex embeddings around it instead of building an overly conservative single embedding. 
{For every time instant $\sigma \in [0,\munderbar{\tau}_s]$, there exists $j \in \{0,\dots,\lfloor\frac{\munderbar{\tau}_s l}{\bar{\sigma}} \rfloor \}$ such that $j\frac{\bar{\sigma}}{l} \leq \sigma \leq (j+1) \frac{\bar{\sigma}}{l}$. Define $\sigma '=\sigma-j \frac{\bar{\sigma}}{l}$ ($\sigma' \in [0,\chi]$, with $\chi =\frac{\bar{\sigma}}{l}$ for $j<\lfloor \frac{\munderbar{\tau}_s l}{\bar{\sigma}} \rfloor$ and $\chi=\munderbar{\tau}_s-j\frac{\bar{\sigma}}{l}$ otherwise).}

\textit{Second:} compute a polynomial approximation of $\Phi(\cdot)$ over each subinterval. Employ the relation $\int _0^{a+b} e^{Ar}dr=\int_0^{a}e^{Ar}dr+\int_0^b e^{Ar}dr(A\int_0^{a} e^{Ar}dr+I)$ to simplify $\Lambda(\sigma)$ given in (\ref{my_etc16}) as:

\begin{equation}
\begin{array}{l}
\Lambda(\sigma) =I+M_j(A+BK)+ \int_0^{\sigma '} e^{Ar}dr N_j (A+BK)
\end{array}
\label{my_etc26}
\end{equation}
with $M_j$ and $N_j$ as in (\ref{my_etc23}). Then, by defining two new matrices $\Pi_{1,j}$  and $\Pi_{2,j}$ as in (\ref{my_etc22}), equation (\ref{my_etc26}) can be rewritten as:

\begin{equation}
\begin{array}{l}
\Lambda(\sigma)=\Pi_{1,j}+\int_0^{\sigma '} e^{Ar}dr \Pi_{2,j}.
\end{array}
\label{my_etc29}
\end{equation}

Replace $\int_0^{\sigma '}e^{Ar}dr$ by its $N_{conv}-th$ order Taylor series expansion to approximate $\Phi$ given by (\ref{my_etc18}), i.e.:  

\begin{equation}
\begin{array}{l}
\int_0^{\sigma '}e^{Ar}dr \simeq \sum_{i=1}^{N_{conv}} \frac{A^{i-1}}{i!} \sigma^{'i}.
\end{array}
\label{my_etc30}
\end{equation}

Remember that $N_{conv}+1$ is the number of vertices we consider for polytopic embedding according to time. The Taylor series expansion of $\Phi$ is given by $\sum_{k=0}^{\infty} L_{k,j} \sigma'^k$ with $L_{k,j}$ defined in (\ref{my_etc21}). 

\textit{Third:} bound the error introduced by the $N_{conv}$-th order approximation of $\Phi$ with a constant term. One can derive the $N_{conv}$-th order approximation of $\Phi$ on the time interval $[j\frac{\bar{\sigma}}{l},(j+1)\frac{\bar{\sigma}}{l}]$ using $\tilde{\Phi}_{N_{conv},j}(\sigma')$ given in (\ref{my_etc25}). Denote by $R_{N_{conv},j}(\sigma ')=\Phi(\sigma)-\tilde{\Phi}_{N_{conv},j}(\sigma ')$. We seek to compute a constant scalar $\munderbar{\nu}$ independent of $\sigma '$ such that $R_{N_{conv},j} (\sigma ') \preceq \munderbar{\nu} I$, to establish $x^T \Phi(\sigma) x \leq 0$ from $x^T (\tilde{\Phi}_{N_{conv},j}(\sigma')+ \munderbar{\nu} I) x \leq 0$. Since $R_{N_{conv},j}$ is symmetric $R_{N_{conv},j} (\sigma ') \preceq \lambda_{\max}(\sigma')I$, where $\lambda_{\max}(\sigma')$ is the maximal eigenvalue of $R_{N_{conv},j} (\sigma ')$, and thus $\munderbar{\nu}$ is provided by (\ref{my_etc24}).

\textit{Fourth:} build a convex polytope that contains the matrix exponential function $\tilde{\Phi}_{N_{conv},j}+\munderbar{\nu} I: [0,\chi] \rightarrow \mathcal{M}_n(\mathbb{R})$, using the method proposed in \cite{Het}. From~\cite{Het} we know that if $x^T \munderbar{\Phi}_{(i,j),s} x \leq 0,\; \forall (i,j) \in \mathcal{K}_s=(\{0,\dots,N_{conv} \} \times \{0,\dots,\lfloor \frac{\munderbar{\tau}_s l}{\bar{\sigma}} \rfloor \})$, with $\munderbar{\Phi}_{(i,j),s} = \sum_{k=0}^i L_{k,j} \chi^{k}+\munderbar{\nu} I$, the following holds $x^{T} (\tilde{\Phi}_{N_{conv},j}(\sigma ')+\munderbar{\nu} I) x \leq 0$ and as a result $x^{T} \Phi(\sigma) x \leq 0,\; \forall \sigma \in [0,\munderbar{\tau}_s]$. $\Box$


\textbf{Proof of Lemma \ref{lemm_u1}:} The proof of Lemma \ref{lemm_u1} is analogous to the proof of Lemma \ref{lem3} with the following changes. 
In the first step now: for every $\sigma \in [\bar{\tau}_s, \bar{\sigma}]$, $\exists j \in \{\lfloor\frac{\bar{\tau}_s l}{\bar{\sigma}} \rfloor ,\dots,l-1\}$ such that $j\frac{\bar{\sigma}}{l} \leq \sigma \leq (j+1) \frac{\bar{\sigma}}{l}$, we define $\sigma '=\sigma-j \frac{\bar{\sigma}}{l}$ ($\sigma' \in [0,\chi]$, with  $\chi=(j+1)\frac{\bar{\sigma}}{l}-\bar{\tau}_s$ for $j=\lfloor \frac{\bar{\tau}_s l}{\bar{\sigma}} \rfloor$ and $\chi =\frac{\bar{\sigma}}{l}$ for $j>\lfloor \frac{\bar{\tau}_s l}{\bar{\sigma}} \rfloor$).
In the third step, now we seek to compute instead a constant scalar $\bar{\nu}$ independent of $\sigma'$ such that $R_{N_{conv},j} \succeq \bar{\nu}I$, to establish $x^T\Phi(\sigma)x\geq0$ from $x^T (\tilde{\Phi}_{N_{conv},j}(\sigma')+ \bar{\nu} I) x \geq 0$. Since $R_{N_{conv},j}$ is symmetric, it follows $R_{N_{conv},j} (\sigma ') \succeq \lambda_{\min}(\sigma')I$, where $\lambda_{\min}(\sigma')$ is the minimal eigenvalue of $R_{N_{conv},j} (\sigma')$. This constant can be computed now from~(\ref{my_etc24_1}).
In the fourth step, applying again the results from~\cite{Het}, 
one has that $x^T \bar{\Phi}_{(i,j),s} x \geq 0,\; \forall (i,j) \in \mathcal{K}_s=(\{0,\dots,N_{conv} \} \times \{\lfloor \frac{\bar{\tau}_s l}{\bar{\sigma}} \rfloor,\dots,l-1 \})$, with $\bar{\Phi}_{(i,j),s} = \sum_{k=0}^i L_{k,j} \chi^{k}+\bar{\nu} I$, implies $x^{T} (\tilde{\Phi}_{N_{conv},j}(\sigma ')+\bar{\nu} I) x \geq 0$ and consequently $x^{T} \Phi(\sigma) x \geq 0,\; \forall \sigma \in [\bar{\tau}_s,\bar{\sigma}]$. $\Box$




\ifCLASSOPTIONcaptionsoff
  \newpage
\fi



\bibliographystyle{IEEEtran}
\bibliography{bare_jrnl}
\end{document}